%% file: acmsmall-conf.tex
\documentclass[acmsmall,screen,nonacm]{acmart}

\usepackage{numprint}
\usepackage{enumerate}
\usepackage{enumitem}
\usepackage{amsmath,amsthm}
\usepackage{tcolorbox} 
\usepackage{pifont}
\theoremstyle{definition} 
\newtheorem{assumption}{Assumption} 
\AtBeginDocument{%
  }

\setcopyright{acmlicensed}
\copyrightyear{2027}
\acmYear{2027}
\acmDOI{XXXXXXX.XXXXXXX}

\acmISBN{978-1-4503-XXXX-X/2018/06}

\begin{document}

\newcommand{\CIbracket}[2]{[#1,\,#2]}
\newcommand{\system}{\texttt{SPLASH}}
\newtcolorbox{takeaway}{
  colback=black!4, colframe=black!25, boxrule=0.4pt, arc=1.5pt,
  left=6pt, right=6pt, top=4pt, bottom=4pt
}
\newcommand{\cmark}{\textcolor{teal}{\ding{51}}}%
\newcommand{\xmark}{\textcolor{red!70!black}{\ding{55}}}%
\DeclareRobustCommand{\pmark}{\tikz[baseline=-0.5ex]{%
  \fill[orange!70!black](0,0)--(90:0.4ex)arc(90:270:0.4ex)--cycle;%
  \draw[orange!70!black](0,0)circle(0.7ex);}}

\newcommand{\akil}[1]{\textbf{\color{blue}Akil: #1}}

\newcommand{\mirza}[1]{\textbf{\color{red}Mirza: #1}}

\title{\system{}: Contrastive Audit Pressure for Detecting and Localizing Security Hallucinations in Generated Code}

\author{Mirza Masfiqur Rahman}
\authornote{Preliminary work done during internship at Cisco Research}
\authornote{Corresponding author}
\email{rahman75@purdue.edu}
\affiliation{%
  \institution{Purdue University}
  \city{West Lafayette}
  \state{Indiana}
  \country{USA}
}

\author{MD Ajwad Akil}
\email{makil@purdue.edu}
\affiliation{%
  \institution{Purdue University}
  \city{West Lafayette}
  \state{Indiana}
  \country{USA}
}

\author{Imtiaz Karim}
\email{imtiaz.karim@utdallas.edu}
\affiliation{%
  \institution{University of Texas at Dallas}
  \city{Richardson}
  \state{Texas}
  \country{USA}
}

\author{Ashish Kundu}
\email{ashkundu@cisco.com}
\affiliation{%
  \institution{Cisco Research}
  \city{San Jose}
  \state{California}
  \country{USA}
}

\author{Ramana Kompella}
\email{rkompell@cisco.com}
\affiliation{%
  \institution{Cisco Research}
  \city{San Jose}
  \state{California}
  \country{USA}
}

\author{Elisa Bertino}
\email{bertino@purdue.edu}
\affiliation{%
  \institution{Purdue University}
  \city{West Lafayette}
  \state{Indiana}
  \country{USA}
}

\renewcommand{\shortauthors}{Rahman et al.}

\begin{abstract}
Code generated by large language models often passes every functional test
while remaining insecure.  These \emph{security hallucinations} are common,
written with high confidence, and repeated across samples, so neither the
generator's own uncertainty nor static analyzers reliably catch them.
We introduce \system{}, which is guided by a new measure called audit pressure. It
contrasts teacher-forced log-probabilities for
the generated program under a security review and a style review.
Aggregated to statements and programs, and grounded in a probabilistic
model of competing edits, audit pressure allows \system{} to choose the secure
program among a model's candidate programs and locate the vulnerable statement,
with no model training, CWE-specific rules, or internal access to the generating model.
On 63,684 programs generated by 12 models for 899 tasks, \system{} picks secure code in 74\% of
cases, places the vulnerable statement first in 58\% of insecure programs,
and chooses better than a frontier LLM judge.
\end{abstract}



\begin{CCSXML}
<ccs2012>
   <concept>
       <concept_id>10002978.10003022</concept_id>
       <concept_desc>Security and privacy~Software and application security</concept_desc>
       <concept_significance>500</concept_significance>
       </concept>
 </ccs2012>
\end{CCSXML}

\ccsdesc[500]{Security and privacy~Software and application security}

\keywords{Code Generation, Security, Hallucination, Large Language Models}


\maketitle

\input{sections/intro}
\input{sections/background}
\input{sections/overview}

\input{sections/methods}
\input{sections/exp_setup}
\input{sections/results}
\input{sections/related_works}
\input{sections/threats}
\input{sections/conclusion}
\input{sections/data_availability}








\bibliographystyle{ACM-Reference-Format}
\bibliography{splash-ref}

\appendix

\end{document}

%% file: sections/intro.tex
\section{Introduction}
\label{sec:intro}

Large language models (LLMs)~\cite{roziere2024code,guo2024deepseek,hui2024qwen25coder}
now produce syntactically and functionally correct programs from
natural-language specifications and are widely integrated into commercial
development workflows~\cite{cursorgrowth26,techxplore26}.  The standard
practice for validating generated code is functional testing: if the
program passes its test suite, it is considered ready to deploy.  However,
a program can satisfy every functional requirement while containing a
security vulnerability that no unit test exercises.

A model asked to hash a user's password may write \texttt{sha256(pw)}
without a salt: the code compiles, returns the expected digest, and
passes the unit test.  The secure alternative, \texttt{bcrypt} with a
random salt, is rarely among the model's likely continuations.  This is a
\emph{security hallucination}~\cite{sechalldefinition,sechalldef2}
(Definition~\ref{def:sechalu}): a functionally correct program that
fails its security test.  For capable models, the same model that writes
unsalted SHA-256 can identify the flaw when asked to review it, showing
that the security knowledge is present but remains inactive during
generation.  Security hallucinations are also produced with high
confidence and reproduced consistently across independent samples,
making them invisible to methods that rely on output variation.

\begin{figure}[t]
  \centering
  \includegraphics[width=0.8\textwidth]{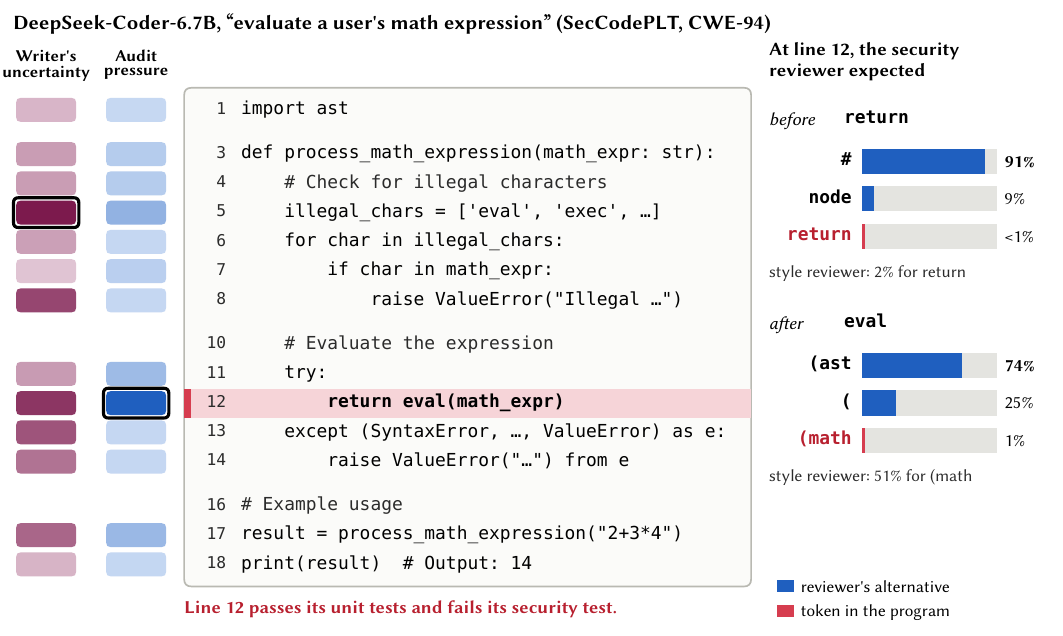}
  \vspace{-4mm}
  \caption{A working but insecure program. DeepSeek-Coder-6.7B evaluates the
  user's expression with \texttt{eval} (line~12; CWE-94), which passes the unit
  tests and fails the security test.  \emph{Left:} the writer's own
  uncertainty peaks on the blocklist (line~5), not on the flaw; audit
  pressure (our method) peaks on line~12.  \emph{Right:} at line~12 the security reviewer
  expected a comment before \texttt{return} (91\%) and \texttt{(ast} after
  \texttt{eval} (74\%), and gave the program's own tokens under 1\%; the
  style reviewer gives them far more (2\% and 51\%).}
        \label{fig:teaser}
\end{figure}

Security hallucinations are common.  Among working programs from
open-weight models in our evaluation, 24--60\% fail their security
tests.  Frontier models are safer (Claude Opus~5: 6\%, GPT-5.6~Sol:
8\%), but agentic pipelines retain roughly a third of their working
solutions insecure. Such frequent nature and confident behavior of the models positions it as a critical concern in software development lifecycle.

\noindent\textbf{Limitations of existing approaches.}
Hallucination detectors for natural language measure variation in the
generator's output
distribution~\cite{farquhar2024semantic,kossen2024semantic,kadavath2022language,manakul2023selfcheckgpt,mundler2024self},
but security hallucinations violate their shared assumption as 86\% of
insecure working programs have only insecure siblings, and generation
uncertainty is weak (AUROC~0.61).
Static
analyzers~\cite{semgrep2024,bandit2024,cppcheck2024,wheeler2024flawfinder}
produce a CWE-matched finding on only 15\% of security hallucinations.
Generation-time
defenses~\cite{he2023large,he2024instruction,xu2025prosec} reduce
vulnerability rates by 25--35\% but require training access and do not
eliminate hallucinations.  Post-hoc methods such as process reward
models~\cite{zhang2025secodeprm} and steering
vectors~\cite{zhou2026surgical} need labeled data and white-box access.
LLM-as-judge approaches can name flawed lines but give no graded score
to rank candidates and cost an API call per program.

\noindent\textbf{Research challenges.}
Three properties make security hallucinations resistant to the
approaches above.
(C1)~The generator's own token distribution carries little signal about
security: when code is written with near-zero entropy, the tokens reveal
negligible information about any downstream property
(Proposition~\ref{prop:entropy}).
(C2)~The vulnerability is typically an absence rather than a presence
(missing validation, omitted salt), so there is no token to flag.
(C3)~Whether code is insecure depends more on the task than on the model
(57\% of variance vs 6\%), so absolute scores do not transfer across
tasks for any judge.  \system{} therefore asks the two questions a
developer actually has: which candidate should I keep, and which line
should I review?

\noindent\textbf{Our approach.}
We present \system{} (\textbf{S}tatement-level \textbf{P}ressure
\textbf{L}ocalization via \textbf{A}udit against \textbf{S}ecurity
\textbf{H}allucinations).  At its core is \emph{audit pressure}, a novel
method that reads the model's distribution under review (addressing~C1): an
auditor model reads a candidate program under two instructions, a
\emph{security-review} and a \emph{style-review}, and the program is
scored token by token via teacher forcing with nothing sampled.  The
method records positive audit pressure wherever the security reviewer
would have written different code but the style reviewer is far more
willing to reproduce the token.  Because teacher-forced copying reveals
where the reviewer wanted to insert code, the pressure of a missing
check lands on the line where it would have been written
(addressing~C2).  Figure~\ref{fig:teaser} shows the concept: on
\texttt{eval(math\_expr)}, the security reviewer expected \texttt{(ast}
after \texttt{eval}, i.e.\ parsing the input safely, while style reviewer accepts it far more readily.

The per-token log-probability difference between the two instructions
decomposes exactly into a log Bayes factor by the chain rule.  Under a
formal model of competing edits (\S\ref{sec:method}), this token-level
pressure aggregates to statement scores and program scores with a
probabilistic interpretation.  The decomposition enables two outputs
from the same pass: (1)~\emph{selecting} the most secure candidate (by
lowest program score) and (2)~\emph{localizing} the vulnerable statement
(by highest statement score).  Because only within-task comparisons are
needed, the task-level shifts that defeat cross-task classification do
not affect either output (addressing~C3).  \system{} applies to code
from any source - open-weight models, closed-API models, and agentic
pipelines, with no training, no CWE-specific rules, and no labeled data.

\noindent\textbf{In practice.}
Security tests do not exist in ordinary development as they must be
written per task by someone who already knows the vulnerability.
Developers have, at most, unit tests. Although we use security tests in our
evaluation to \emph{measure} effectiveness of~\system{], it does not rely on them as inputs.  Meanwhile, sampling several candidates is already a common practice
(IDE suggestions, best-of-$k$, agent retries), but there is no
established way to choose among them for security.  Audit pressure fills
this slot as it requires no training, no per-CWE rules, no generator weights, and no paid API.  It works with whatever unit tests exist and
also without any (Figure~\ref{fig:selection}b), and points the reviewer
at the line; e.g., the vulnerable statement is in the top~3 in 85\% of cases.

\noindent\textbf{Findings.}
We evaluate \system{} on 899 tasks with executable security tests
across 12 generators:
\begin{itemize}[nosep,leftmargin=2em]
\item \textbf{Selection.}~\system{} picks secure code in 74\% of
  decisive candidate pools (random: 54\%, uncertainty: 65\%), closing
  46\% of the gap to a perfect pick.  On held-out tasks, the gain is
  within two points.  On agentic code, \system{} picks secure in 76\%
  of pools (random: 57\%).
\item \textbf{Localization.}~\system{} places the vulnerable statement
  first in 58\% of insecure programs (random: 9\%) and identifies the
  exact changed line in 77\% of benchmark reference pairs.
\item \textbf{Practitioner comparison.}\enspace \system{} selects
  secure code more often than a frontier LLM judge (77\% vs.\ 68\%)
  while requiring only local inference rather than per-query API calls.
  Although the judge localizes more precisely on its sample (70\% vs.\
  54\% top-1), \system{} provides both selection and localization from
  the same pass.
\end{itemize}

\noindent\textbf{Contributions.} We make the following contributions in this paper:
\begin{itemize}[nosep,leftmargin=2em]
\item \textbf{Problem characterization.}  We provide the first study
  showing that security hallucinations are written confidently and
  repeated across samples, across 12 generators and 899 tested tasks,
  establishing that generation uncertainty and static analysis both miss most of them (\S\ref{sec:rq1}).
\item \textbf{Method and system.}  We introduce \emph{audit pressure},
  a training-free, model-agnostic signal grounded in a formal model of
  competing edits, and build \system{}, a pipeline that takes generated
  code from any source and produces per-statement localization and
  per-pool candidate selection (\S\ref{sec:method},
  \S\ref{sec:overview}).
\item \textbf{Rigorous evaluation.}  We show that \system{} achieves
  consistent gains across benchmarks, temperatures, model sizes, and beats random chance on 16 of 17 CWE types, comparing against static analyzers, a frontier
  LLM judge, and the generator's own uncertainty
  (\S\ref{sec:rq2}--\S\ref{sec:rq5}). The code and artifacts of~\system{} are publicly available at:~\url{https://github.com/Prometheused/SPLASH}
\end{itemize}

%% file: sections/background.tex
\section{Background}
\label{sec:background}
In this section, we discuss relevant background on code generation, security hallucination, and code security analysis.
\subsection{Code Generation and Generation Uncertainty}
\label{sec:bg:codegen}

Given the prompt and the tokens generated so far, an autoregressive code model writes a program one token at a time, sampling each token
from its predicted distribution~\cite{guo2024deepseek,roziere2024code, hui2024qwen25coder, holtzman2020curious}.
Varying the random seed or the sampling temperature, a single prompt
yields many distinct candidate programs that share the same functional
intent but may differ in correctness, style, and security.
For a prompt~$x$ and a generated program $P = (c_1, \dots, c_n)$, the
\emph{generation uncertainty} of model~$M$ is
\begin{equation}
  U(P) = -\frac{1}{n}\sum_{t=1}^{n} \log p_M\bigl(c_t \mid x,\, c_{<t}\bigr),
  \label{eq:gen_uncertainty}
\end{equation}
which is high when the model is unsure of the tokens.  It
serves as our main baseline (\S\ref{sec:setup}).

\subsection{Security Hallucinations}
\label{sec:bg:sechalu}

A \emph{hallucination} in code generation is typically defined as output
that deviates from the intended behavior: the program does not compile,
fails its tests, or solves the wrong
problem~\cite{tian2024codehalu,liu2024exploring, rahman24hallucination}.  We define a stricter
variant.
\vspace{-2mm}
\begin{definition}[Security hallucination]
\label{def:sechalu}
A \emph{security hallucination} is a generated program~$P$ that passes
every functional test and fails at least one security test:
$\mathrm{Func}(P)=1$ and $\mathrm{Sec}(P)=0$.  A program is \emph{secure
and working} if $\mathrm{Func}(P)=1$ and $\mathrm{Sec}(P)=1$.
\end{definition}
\vspace{-1mm}
Security hallucinations are dangerous because they appear correct - the
program runs, returns the right outputs, and often passes a developer's spot checks.  Hallucination detectors rely on the model hesitating or
disagreeing with itself across
samples~\cite{farquhar2024semantic,manakul2023selfcheckgpt}. However, security hallucinations tend to be confident and repeated
across samples.  The following bound makes it precise why confident code carries little evidence.
\vspace{-2mm}
\begin{proposition}[Entropy bounds what a window can reveal]
\label{prop:entropy}
Let $r(\cdot \mid u)$ be the distribution the generator samples from
after prefix~$u$ and $H_r(u)$ its entropy.  For a window
$W = (Y_a, \dots, Y_{b-1})$ starting at position~$a$ and ending at
stopping time~$b$, and for every function $\beta$ of the finished
program (in particular $\beta = \mathrm{Sec}$),
\begin{equation}
  I\bigl(\beta(P);\, W \mid u_0\bigr)
  \;\le\;
  \mathbb{E}\!\Biggl[\,\sum_{t=a}^{b-1} H_r(Y_{<t})
    \;\Bigg|\; u_0\Biggr].
  \label{eq:entropy_bound}
\end{equation}
\end{proposition}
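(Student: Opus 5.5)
The plan is to bound the mutual information by the entropy of the window, and then bound that entropy by the sum of per-step entropies through the chain rule. Conditioning on $u_0$ throughout, the data-processing inequality gives $I(\beta(P); W\mid u_0)\le H(W\mid u_0)$ for any function $\beta$ of the finished program, because mutual information never exceeds the entropy of either argument. No structure of $\beta$ is used, so the case $\beta=\mathrm{Sec}$ is covered. The remaining task is to show $H(W\mid u_0)\le \mathbb{E}\bigl[\sum_{t=a}^{b-1} H_r(Y_{<t})\mid u_0\bigr]$.

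For a deterministic window (fixed $a$, fixed $b$) this is the chain rule for entropy. We write $H(W\mid u_0)=\sum_t H(Y_t\mid Y_{<t},u_0)$. Each conditional entropy equals $\mathbb{E}[H_r(Y_{<t})]$, because conditional on the prefix $Y_{<t}$ the next token is drawn exactly from $r(\cdot\mid Y_{<t})$. Here we identify $Y_{<t}$ with the full prefix, including $u_0$. Equality holds in this case; the inequality only becomes strict when the start $a$ is itself random, where conditioning on the earlier tokens can only reduce entropy.

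The main obstacle is that $b$ is a stopping time, so $W$ has random length and the fixed-length chain rule does not apply directly. I would pad the sequence: define $\tilde Y_t = Y_t$ for $t<b$ and $\tilde Y_t=\star$ (an absorbing symbol) for $t\ge b$. Then $W$ is a function of $(\tilde Y_a,\tilde Y_{a+1},\dots)$, since the first $\star$ marks the end. Conversely, the padded sequence is a function of $W$, so no information is added.

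Apply the chain rule to the padded sequence up to a horizon $T$, and use that $\{t<b\}$ is determined by $Y_{<t}$ because $b$ is a stopping time. On the event $\{t\ge b\}$ the conditional entropy of $\tilde Y_t$ is $0$. On the event $\{t<b\}$ it is $H_r(Y_{<t})$. This gives
\[H(\tilde Y_{a:T}\mid u_0)=\mathbb{E}\Bigl[\textstyle\sum_{t=a}^{T}\mathbf 1\{t<b\}\,H_r(Y_{<t})\Bigm| u_0\Bigr].\]
Letting $T\to\infty$, monotone convergence handles the right side. Lower semicontinuity of entropy, or monotonicity in $T$ when $b$ is a.s.\ finite, handles the left side. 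Together these give the bound on $H(W\mid u_0)$.

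If $a$ is also random, I would additionally assume it is a stopping time and run the same indicator argument with $\mathbf 1\{a\le t<b\}$. Combining this with the first step yields \eqref{eq:entropy_bound}.
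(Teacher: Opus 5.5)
Your overall route is the right one: bound $I(\beta(P);W\mid u_0)$ by $H(W\mid u_0)$, then evaluate $H(W\mid u_0)$ by the chain rule on the $\star$-padded stopped sequence. But the identity $H(W\mid u_0)=\sum_t H(Y_t\mid Y_{<t},u_0)$, on which everything rests, holds only if $u_0$ is the \emph{entire} prefix $Y_{<a}$ at which the window opens. The chain rule for $W$ conditions term $t$ on $(u_0,Y_a,\dots,Y_{t-1})$, and this equals the generator's prefix $Y_{<t}$ only in that case.

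Your remark that the inequality ``only becomes strict when $a$ is random, where conditioning on the earlier tokens can only reduce entropy'' has the direction backwards. If tokens before $a$ are missing from the conditioning, each chain-rule term is \emph{at least} $\mathbb{E}[H_r(Y_{<t})]$. The chain-rule route then yields a lower bound, not an upper bound, and the proposition itself can fail. For a concrete case, let the first generated token $Y_1$ be a uniform bit and let $Y_2=Y_1$ deterministically. Take $u_0$ to be the prompt, $W=(Y_2)$ (so $a=2$, $b=3$), and $\beta(P)=Y_1$. Then $I(\beta(P);W\mid u_0)=1$ bit, while the right-hand side is $H_r(Y_{<2})=0$. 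Since $a=2$ is a constant, hence a stopping time, this also breaks your extension via $\mathbf{1}\{a\le t<b\}$. For $t\ge a$ the history $(u_0,\tilde Y_{a:t-1})$ does not reveal $Y_{<t}$, so the $t$-th term is not $H_r(Y_{<t})$.

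The repair is to read $u_0$ as $Y_{<a}$. This is what the proposition must intend, since with a shorter $u_0$ it is false, as above. For fixed $a$ your padded computation then goes through unchanged. It even gives equality $H(W\mid u_0)=\mathbb{E}[\sum_{t=a}^{b-1}H_r(Y_{<t})\mid u_0]$ when $b<\infty$ a.s., so all the slack is in $I\le H$.

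For a random start that is a stopping time, condition on the stopped prefix $u_0=Y_{<a}$ and use the strong Markov property of the prefix chain $(Y_{<t})_t$. Given $u_0$, the continuation is generated by $r$ starting from $u_0$, so the fixed-start argument applies verbatim and no ``strict'' case arises.

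Finally, state the stopping-time convention explicitly. Your step $\{t<b\}\in\sigma(Y_{<t})$ needs $b$ to be a stopping time for the prefix filtration, meaning that whether $Y_t$ belongs to the window is decided before $Y_t$ is drawn. If only $\{b\le t\}\in\sigma(Y_{\le t})$ holds, with the window ending at $Y_{b-1}$, the bound can fail. It then holds only with the extra term $H_r(Y_{<b})$ inside the expectation.
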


\noindent
A statement written with near-zero entropy therefore reveals almost
nothing about $\mathrm{Sec}(P)$ to any function of the generator's own
token distribution.

\subsection{Static Analysis}
\label{sec:bg:cwe}

Vulnerability types are documented by the Common Weakness Enumeration
(CWE)~\cite{mitre2024cwe}.  Static analyzers such as
Semgrep~\cite{semgrep2024}, Bandit~\cite{bandit2024},
Cppcheck~\cite{cppcheck2024}, and Flawfinder~\cite{wheeler2024flawfinder}
flag vulnerable patterns without executing the code.  They are fast and
deterministic, and they excel at flagging dangerous code that is present
(e.g., \texttt{eval} on user input, SQL built by string formatting), but
a flaw that consists of a missing check (absent input validation, an
omitted salt, no authorization guard) often leaves no pattern for a rule
to match.

%% file: sections/overview.tex
\section{Overview}
\label{sec:overview}

We first define the two tasks this paper addresses, then
identify the research challenges, and finally
outline how audit pressure addresses each one.



\subsection{Problem Statement}
\label{sec:overview:problem}

Let $\mathcal{S} = (x, T_f, T_s, \mathcal{C})$ be a task with prompt~$x$,
functional and security test suites, and CWE classes.  A generator
produces $k$ candidate programs.  We address two tasks:

\smallskip\noindent
\textbf{Choosing.}\quad Given a pool $Q = \{P_1, \dots, P_k\}$ of
candidate programs for one task (optionally restricted to those passing
functional tests), select the program most likely to be secure and
working:
\[
  \text{pick}(Q) = \arg\min_{P \in Q}\; g(P),
\]
where $g$ is a scoring rule (higher = more likely insecure).

\smallskip\noindent
\textbf{Locating.}\quad Given a single insecure program~$P$ with
statement rows $\mathcal{R}(P) = \{s_1, \dots, s_R\}$, rank statements by
their likelihood of containing the vulnerability:
\[
  \text{locate}(P) = \arg\max_{s \in \mathcal{R}(P)}\; g(s).
\]

\smallskip\noindent
Both of these compare programs or statements \emph{within} one coding task, by design.

\subsection{Challenges}
\label{sec:overview:challenges}

Detecting and localizing security flaws in generated code faces four
research challenges that existing hallucination-detection and
static-analysis methods do not overcome.

\smallskip\noindent
\textbf{C1: Security hallucinations are written confidently, so the
generator's own distribution carries no feedback.}
Hallucination detectors exploit model hesitation~\cite{farquhar2024semantic,manakul2023selfcheckgpt,mundler2024self}: token probabilities spread across alternatives and different samples produce different outputs.  Security hallucinations show the
opposite pattern as 86\% of insecure working programs have \emph{only} insecure siblings across independent samples, and generation uncertainty ranks insecurity at AUROC~0.61 overall (~\S\ref{sec:rq1}). 
Proposition~\ref{prop:entropy} confirms that confidently written
flaws are invisible to every function of the generator's token
distribution.

\smallskip\noindent
\textbf{C2: The dominant vulnerability pattern is an absence instead of presence.}
A security hallucination typically omits a necessary security evaluation
check rather than inserting a dangerous function call. Because the flaw is
in non-generated code, syntactic pattern matching cannot detect it: static analyzers find a CWE-matched issue in only 15\% of insecure programs, and nearly as often in secure ones (i.e., 12\% false positives). An
effective differentiating signal must reason about what \emph{should} be in the code, not only about what is already there.


\smallskip\noindent
\textbf{C3: Task-level label shifts defeat program-level detection.}
Whether generated code is insecure depends more on the task than the model. Task identity alone explains 57\% of the variance in whether
a working program is insecure, whereas generator identity explains 6\%.  Any
score applied across tasks inherits these base-rate shifts.  The
problem is worst when a task's security tests enforce requirements
the prompt never states. On these tasks every code-level judge,
including a yes/no verdict, performs poorly.  

\smallskip\noindent
\textbf{C4: Localization requires token-level attribution without
training data.}
A binary verdict per program cannot isolate the security flaw. Any measure must
decompose to individual tokens and aggregate to statements. Because the missing security check is not in the generated program, the auditor's disagreement appears at the tokens just after the gap: this is where the auditor would have started writing the check, but the program continues with other code instead. The method must therefore attribute
pressure to a line that follows the absent code, not to the absent code itself.

\subsection{Our Approach}
\label{sec:overview:approach}

We now briefly address the challenges above. We discuss the detailed method in \S\ref{sec:method}.

\smallskip\noindent
\textbf{Addressing C1: Read a model's distribution under a
review instruction.} The security knowledge needed to identify a flaw is \emph{present} in a capable
language model but remains \emph{inactive} during code generation. Fundamentally, the
model can not review the code while writing it. Audit pressure activates it: a separate audit reads the finished program under a security-review prompt and scores based on how willing it is to copy each token. Where the auditor would have written different code, its copy probability drops, revealing the flaw.  This also opens the possibility for the generator model to audit itself or for a separate model to audit it. Since the auditor reads finished code, not the generator's writing-time distribution, C1's entropy bound~(Prop.~\ref{prop:entropy}) does not apply to the auditor.


\smallskip\noindent
\textbf{Addressing C2: Contrastive framing isolates security from style.}
A security-review prompt alone does not separate security concerns from
general code-quality concerns. Audit pressure subtracts the score under a
\emph{style-review} prompt from the score under a \emph{security-review}
prompt, token by token, isolating the
security-specific reluctance. Crucially, this detects absences: where the
security reviewer would have inserted a check before a line, those
tokens receive high probability under the style frame and low
probability under the security frame, producing a positive pressure
measure even though the vulnerability is based on missing artifacts in the program~(\S\ref{sec:method:scores}).

\smallskip\noindent
\textbf{Addressing C3: Within-task ranking instead of cross-task
detection.} We frame the problem as within-task comparisons. For choosing a secure program, all candidates in a pool share the same prompt, so differences in how often a task produces insecure code do not affect the ranking (Lemma~\ref{lem:task_shift}). For locating vulnerable statements, all statements belong to the same program, so program-level biases cancel out. This invariance is why audit pressure succeeds at choosing and locating security hallucinations even without task specific security tests~(\S\ref{sec:rq5}).


\smallskip\noindent
\textbf{Addressing C4: Exact per-token decomposition via the chain
rule.} The log-probability difference between the two review frames
decomposes exactly into one term per token without any approximation, gradient or training requirement. Each token's contribution is assigned to the
corresponding code statement. For a missing check, this
places the pressure on the line \emph{after} the gap. Summing over a statement's tokens gives the statement score---the highest-scoring statement is
the localization output. Under a competing-edits model, higher statement scores correspond to higher probabilities suggesting the auditor's fix would begin there (Theorem~\ref{thm:main}, \S\ref{sec:method:edit}).

%% file: sections/methods.tex
\section{Detailed Design of~\system{}}
\label{sec:method}

\begin{figure}[t]
  \centering
  \includegraphics[width=0.67\textwidth,angle=270, trim=0cm 0cm 1.8cm 0,clip]{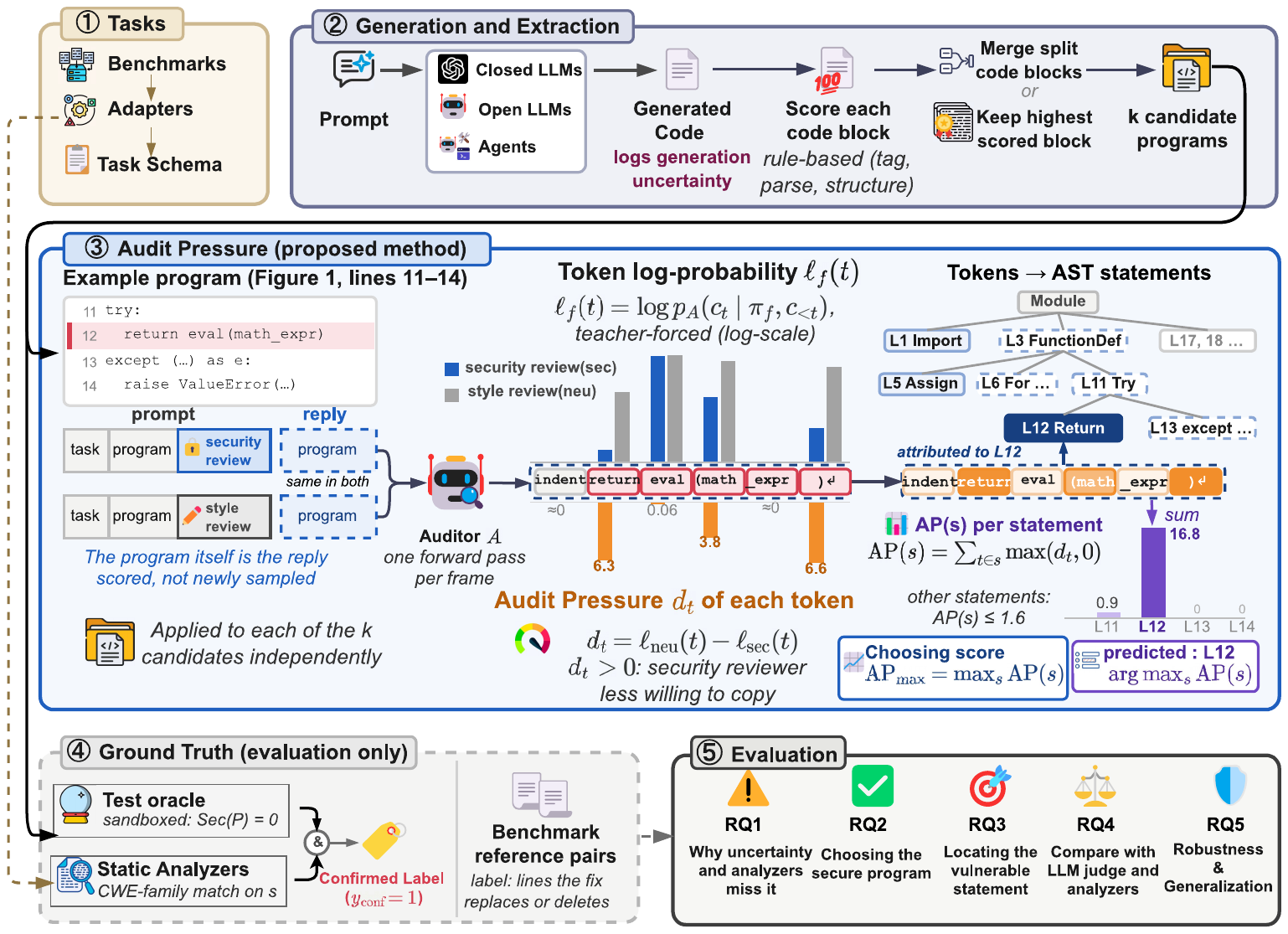}
  \vspace{-4mm}
  \caption{Overview of \system{}\@.  Tasks with executable security tests
    enter the pipeline~(\ding{182}).  Each program's top-scoring fenced
    code block is extracted~(\ding{183}).  Comments are blanked, and an
    auditor model scores every token under a security-review and a
    style-review prompt; the per-token difference (audit pressure) is
    aggregated to statements and to the whole
    program~(\ding{184}).  An independent oracle provides ground-truth
    labels used only for evaluation~(\ding{185}), and five research
    questions measure effectiveness~(\ding{186}).}
  \label{fig:overview}
  \vspace{-2mm}
\end{figure}
Figure~\ref{fig:overview} shows the overview of \system{}.
We now describe each component in detail.

\subsection{Task Corpus, Generation, and Extraction}
\label{sec:method:corpus}

A task $\mathcal{S} = (x, \ell, \sigma, T_f, T_s, \mathcal{C})$ consists
of a natural-language prompt~$x$, a programming language~$\ell$, an
optional function signature~$\sigma$, functional and security test
suites~$T_f$ and~$T_s$, and the CWE classes~$\mathcal{C}$ the task
exercises.  The prompt never mentions a CWE or any particular security requirement.
We curated 899 tasks with executable security tests that come from
SecCodePLT~\cite{yang2024seccodeplt},
CWEval~\cite{zhang2025cweval}, and 16 tasks we authored for
under-represented CWEs~(\ding{182}); see \S\ref{sec:setup} for
benchmarks and counts. We build our custom adapter that can parse data from any of the above-mentioned benchmarks into a uniform task schema.

For each task, $k$ programs are sampled from a generator~$M$~(\ding{183}).
As model responses often contain multiple code blocks accompanied by natural language prose, we apply a code extraction strategy based on scoring. Scores are assigned based on their likelihood of being a complete, valid program (e.g., for Python, blocks tagged with $\verb|```|python...\verb|```|$, contain keywords like ``\texttt{import}", ``\texttt{def}", etc.). The top-scoring code block is extracted from each response. Sibling blocks are merged only when the best block does not parse.
Reasoning traces from thinking models are discarded.
Any model can serve as the generator, including closed-API models and
agentic pipelines, because audit pressure requires only the finished
source code.  The generator's per-token log-probabilities, when
available, serve as a baseline (generation uncertainty) but are not used
by audit pressure.

\subsection{Audit Pressure: Scoring Programs Under Review}
\label{sec:method:audit}

Audit pressure is the core contribution of~\system{}, computed in step~\ding{184}.
The core idea is simple: we show the finished program to an auditor model under two review
instructions and measure how much less willing the security reviewer is
to reproduce each token.

\noindent\textbf{Comment blanking.}
Before scoring, all comments are removed from program~$P$ to
produce~$\tilde{P}$, preserving every line break so that line numbers
and statement rows remain unchanged.  Without this step, a comment such
as \texttt{\# prevent path traversal} would let the auditor read the
developer's claim rather than reason from the code.

\noindent\textbf{Frames.} A \emph{frame} $f$ is a chat prompt $\pi_f(x, \tilde{P})$ that presents
the task and the comment-blanked code, pre-filling the assistant turn
with an opening code fence.  We define three frames, none of which names
the task's CWE:

\begin{description}[nosep,leftmargin=1em]
\item[\texttt{sec}:] \emph{``Review this implementation for security
  vulnerabilities, as an attacker would.  If nothing needs fixing, output it
  unchanged.  Output only the code.''}
\item[\texttt{neu}:] the same with \emph{``\ldots for readability and style''}
  and \emph{``revised''}.
\item[\texttt{verdict}:] \emph{``Does this implementation contain a security
  vulnerability?  Answer with Yes or No only.''}
\end{description}

\noindent
Because $\tilde{P}$ is already visible in the prompt, outputting it
unchanged is the default: the auditor copies unless it sees a reason to
edit.

\noindent\textbf{Scoring and audit pressure.}
Each frame is scored in a single forward pass with teacher forcing.
The reply is
$r = \tilde{P} \,\|\, \texttt{\textbackslash n\`{}\`{}\`{}}$
(the comment-blanked code followed by the closing fence), tokenized
into $c_1, \dots, c_n$.  For every token position~$t$:
\begin{equation}
  \ell_f(t) = \log p_A\!\bigl(c_t \mid \pi_f(x,\tilde{P}),\, c_{<t}\bigr).
  \label{eq:lp}
\end{equation}
All frames score the same token ids, so the log-probabilities are
directly comparable.  The \emph{audit pressure} of token~$t$ is
\begin{equation}
  d_t = \ell_{\texttt{neu}}(t) - \ell_{\texttt{sec}}(t).
  \label{eq:dt}
\end{equation}
A positive $d_t$ means the security reviewer is less willing to copy
this token than the style reviewer.  The contrast removes what makes a
token hard to copy under any review and isolates the security-specific
reluctance.  Summed over the program, audit pressure equals a log Bayes
factor by the chain rule:
\begin{equation}
  \sum_{t=1}^{n} d_t
  = \log p_A(\mathbf{c} \mid \pi_\texttt{neu})
  - \log p_A(\mathbf{c} \mid \pi_\texttt{sec}).
  \label{eq:bayes}
\end{equation}
Two passes per program suffice.
One the other hand, the \texttt{verdict} frame produces a binary readout:
\begin{equation}
  v(P) = \frac{p_A(\text{Yes} \mid \pi_\texttt{verdict})}
              {p_A(\text{Yes} \mid \pi_\texttt{verdict})
             + p_A(\text{No} \mid \pi_\texttt{verdict})},
  \label{eq:verdict}
\end{equation}
measuring the auditor's direct belief that the program is vulnerable.
Both readouts come from the same model on the same code; their relationship is explored in \S\ref{sec:rq2}.

\subsection{The Reviewer as an Edit Process}
\label{sec:method:edit}

The log-probabilities in Eq.~\eqref{eq:lp} have a natural interpretation
as edit hazards.  Let $C \sim p_A(\cdot \mid \pi_f)$ be the reply the
auditor would actually write.  Its divergence time is
$\tau_f = \min\{t : C_t \neq c_t\}$, and its edit hazard at position~$t$
is $h_f(t) = \Pr[\tau_f = t \mid \tau_f \ge t]$.
The security reviewer may start an edit for the same reasons as the style
reviewer (formatting, naming) or because it sees a vulnerability.
We model the second as an independent clock.

\begin{assumption}[Competing edits]
\label{asm:competing}
There exist security-edit hazards $\lambda_t \in [0,1)$ such that
\begin{equation}
  1 - h_\texttt{sec}(t)
  = \bigl(1 - h_\texttt{neu}(t)\bigr)\bigl(1 - \lambda_t\bigr),
  \qquad t = 1, \dots, n.
  \label{eq:competing}
\end{equation}
\end{assumption}

\begin{theorem}[Audit pressure is the security-edit hazard]
\label{thm:main}
Under Assumption~\ref{asm:competing}, for every token~$t$, every
statement~$s$, and the program~$P$:
\begin{align}
  d_t &= -\log(1 - \lambda_t) \;\ge\; 0, \label{eq:thm_dt}\\[2pt]
  \textstyle\mathrm{AP}(s) = \sum_{t \in B(s)} d_t
      &= -\log(1 - \pi_s), \label{eq:thm_stmt}\\[2pt]
  \textstyle\sum_{t=1}^{n} d_t &= -\log\,\rho(P), \label{eq:thm_prog}
\end{align}
where $\pi_s = \Pr[\tau^\lambda \in B(s) \mid \tau^\lambda \ge\min B(s)]$ is the probability that the security fix begins in
statement~$s$, and $\rho(P) = \Pr[\tau^\lambda = \infty]$ is the
probability that nothing needs fixing.
\end{theorem}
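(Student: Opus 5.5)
The plan is to translate everything into hazards, using one identity that links teacher-forced log-probabilities to edit hazards. Since $C \sim p_A(\cdot \mid \pi_f)$ is generated autoregressively, the event $\{\tau_f \ge t\}$ is exactly the event $C_{<t} = c_{<t}$. Therefore
\[
1 - h_f(t) = \Pr[C_t = c_t \mid C_{<t} = c_{<t}] = p_A\bigl(c_t \mid \pi_f, c_{<t}\bigr),
\]
so $\ell_f(t) = \log\bigl(1 - h_f(t)\bigr)$. Substituting this into Eq.~\eqref{eq:dt} and applying Assumption~\ref{asm:competing} gives
\[
d_t = \log\bigl(1-h_\texttt{neu}(t)\bigr) - \log\bigl(1-h_\texttt{sec}(t)\bigr) = -\log(1-\lambda_t).
\]
This quantity is nonnegative because $\lambda_t \in [0,1)$, which proves Eq.~\eqref{eq:thm_dt}.

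For the statement and program claims, I would make the independent security clock explicit. Let $\tau^\lambda$ be a random time on $\{1,\dots,n\}\cup\{\infty\}$ with discrete hazards $\lambda_t$, meaning $\Pr[\tau^\lambda = t \mid \tau^\lambda \ge t] = \lambda_t$. Under this definition $\Pr[\tau^\lambda > t \mid \tau^\lambda \ge t] = 1-\lambda_t$, and the standard discrete survival product formula gives
\[
\Pr[\tau^\lambda > m \mid \tau^\lambda \ge j] = \prod_{t=j}^{m}(1-\lambda_t).
\]
Taking $j=1$ and $m=n$ yields $\rho(P) = \Pr[\tau^\lambda=\infty] = \prod_{t=1}^n(1-\lambda_t)$. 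Taking logarithms and using the first part gives Eq.~\eqref{eq:thm_prog}, which also agrees with the Bayes-factor identity in Eq.~\eqref{eq:bayes}.

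For a statement $s$, I need its token block $B(s)$ to be a contiguous interval $\{j,\dots,m\}$. I would state this explicitly as the convention for statement rows, since each token is assigned to the one statement containing it. Then $1-\pi_s = \Pr[\tau^\lambda > m \mid \tau^\lambda \ge j] = \prod_{t\in B(s)}(1-\lambda_t)$, and taking logarithms and summing the first part over $B(s)$ gives Eq.~\eqref{eq:thm_stmt}.

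I expect the main obstacle to be the first step: the identification $1-h_f(t) = p_A(c_t \mid \pi_f, c_{<t})$ together with a careful definition of $\tau^\lambda$. The paper uses $\tau^\lambda$ without constructing it. I would define it as the failure time of the $\lambda$-clock, which Assumption~\ref{asm:competing} describes as running independently of the style clock with $\tau_\texttt{sec} = \min(\tau_\texttt{neu}, \tau^\lambda)$. I would then check that this construction reproduces Eq.~\eqref{eq:competing}: independence makes the survival probabilities multiply. This check also covers an edge case. If some $h_\texttt{neu}(t)=1$, the conditioning events $\{\tau_f \ge t\}$ have probability zero; in that case I would note that the identity is read through the teacher-forced conditionals themselves, which remain well defined.
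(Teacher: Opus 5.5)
Your proposal is correct and follows the paper's own argument. You identify $\ell_f(t)=\log(1-h_f(t))$ via $\{C_{<t}=c_{<t}\}=\{\tau_f\ge t\}$, substitute the competing-edits relation to get $d_t=-\log(1-\lambda_t)$, and then use the discrete survival product formula, which is exactly the ``telescoping'' the paper invokes for the statement and program sums; your explicit construction of $\tau^\lambda$ from the hazards $\lambda_t$ and your requirement that $B(s)$ be a contiguous token block are details the paper leaves implicit, and the contiguity is what makes the statement-level identity hold.
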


\begin{proof}[Proof sketch]
Since $\{C_{<t} = c_{<t}\} = \{\tau_f \ge t\}$, we have
$p_A(c_t \mid \pi_f, c_{<t}) = 1 - h_f(t)$, so
$\ell_f(t) = \log(1 - h_f(t))$.  Combined with
Eq.~\eqref{eq:competing},
$d_t = \log(1 - h_\texttt{neu}(t))
     - \log\bigl[(1 - h_\texttt{neu}(t))(1 - \lambda_t)\bigr]
     = -\log(1 - \lambda_t)$.
Summing over a statement's token block and over all tokens
gives~\eqref{eq:thm_stmt} and~\eqref{eq:thm_prog} by telescoping.
\end{proof}

\noindent\textbf{Clipping.}
When $d_t < 0$, the implied $\lambda_t$ is negative, violating
Assumption~\ref{asm:competing}.  Replacing $d_t$ with $\max(d_t, 0)$
sets $\lambda_t$ to zero, the smallest hazard the assumption
allows, zeroing out style-only disagreements.
All scores below use this positive part.

\subsection{Statement and Program Scores}
\label{sec:method:scores}

\noindent\textbf{Attribution.}
A program's \emph{statement rows} are its leaf AST statements
(Python~\texttt{ast}, C/C++ via tree-sitter), with compound-statement
headers split at clause boundaries so that every source line belongs to
exactly one row.  Token~$t$ is attributed to the row owning the first
non-whitespace character of~$\tilde{P}$ at or after the token's start.
This rule matters as the pressure of a missing check appears on the
whitespace or first token where the check would have been inserted, and
the attribution carries that pressure to the correct line.

\noindent\textbf{Scores.}

\begin{definition}[Statement and program scores]
\label{def:scores}
\begin{equation}
    \mathrm{AP}(s) = \sum_{t \in B(s)} \max(d_t, 0), \qquad
    \mathrm{AP}_{\max}(P) = \max_{s \in \mathcal{R}(P)} \mathrm{AP}(s),
  \qquad
    \mathrm{AP}_{\Sigma}(P) = \sum_{t=1}^{n} \max(d_t, 0),
    \label{eq:ap_stmt}
\end{equation}
where $B(s)$ is the set of token positions attributed to row~$s$ and
$\mathcal{R}(P)$ is the set of statement rows.
\end{definition}

Here, $\mathrm{AP}(s)$ is the locating score: the statement with the largest
audit pressure is the most likely site of a security fix
(Theorem~\ref{thm:main}, Eq.~\ref{eq:thm_stmt}).
$\mathrm{AP}_{\max}$ is the choosing score: the candidate whose most
contested statement carries the least pressure is the one most likely to be copied unchanged by the security
reviewer.

\noindent\textbf{Why $\mathrm{AP}_{\max}$ rather than $\mathrm{AP}_\Sigma$.}
Under i.i.d.\ measurement noise
$\varepsilon_t \sim \mathcal{N}(0, \sigma^2)$, the noise floor of
$\mathrm{AP}_\Sigma$ on a secure program grows as
$\sigma n / \sqrt{2\pi}$, penalizing longer programs (which are
typically the secure ones, since they add checks and error handling).
$\mathrm{AP}_{\max}$'s noise floor depends on one statement's length~$m$
and only $O(\sqrt{\log R})$ on the number of rows~$R$:

\begin{proposition}[Length bias of summed pressure]
\label{prop:noise}
Let $\hat{d}_t = d_t + \varepsilon_t$ with $\varepsilon_t$ i.i.d.\
$\mathcal{N}(0,\sigma^2)$, and let $P$ be a secure program ($d_t = 0$)
with $n$ tokens, rows of at most $m$ tokens, and $|\mathcal{R}| = R$
rows.  Then
\begin{equation}
  \mathbb{E}[\widehat{\mathrm{AP}}_\Sigma]
    = \frac{\sigma}{\sqrt{2\pi}}\,n, \qquad
  \mathbb{E}[\widehat{\mathrm{AP}}_{\max}]
    \le \frac{\sigma}{\sqrt{2\pi}}\,m
        + \sigma\sqrt{2m\log R}.
  \label{eq:noise}
\end{equation}
\end{proposition}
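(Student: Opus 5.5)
The plan is to exploit the fact that, on a secure program, every clipped score is a sum of positive parts of i.i.d.\ centred Gaussians. The summed score then reduces to linearity of expectation. The max-of-rows score reduces to a standard maximal inequality for sub-Gaussian variables after centring.

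\textbf{Summed score.} With $d_t=0$, each clipped term is $\max(\varepsilon_t,0)$, so $\widehat{\mathrm{AP}}_\Sigma=\sum_{t=1}^n \max(\varepsilon_t,0)$. For $\varepsilon\sim\mathcal{N}(0,\sigma^2)$ we have $\mathbb{E}[\max(\varepsilon,0)]=\int_0^\infty x\,\phi_\sigma(x)\,dx=\sigma/\sqrt{2\pi}$. Linearity of expectation then gives $\mathbb{E}[\widehat{\mathrm{AP}}_\Sigma]=\sigma n/\sqrt{2\pi}$ exactly. Independence is not even needed for this part.

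\textbf{Max score, centring.} For each row $s$ set $X_s=\sum_{t\in B(s)}\max(\varepsilon_t,0)$, so that $\widehat{\mathrm{AP}}_{\max}=\max_s X_s$. Write $\mu_s=\mathbb{E}X_s=|B(s)|\,\sigma/\sqrt{2\pi}\le \sigma m/\sqrt{2\pi}$. Then
\[
\max_s X_s\le \frac{\sigma m}{\sqrt{2\pi}}+\max_s (X_s-\mu_s),
\]
and it remains to show $\mathbb{E}\max_s(X_s-\mu_s)\le\sigma\sqrt{2m\log R}$.

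\textbf{Max score, maximal inequality.} The map $x\mapsto\max(x,0)$ is $1$-Lipschitz. Hence $X_s$ is a $1$-Lipschitz function, in the Euclidean norm scaled by $\sqrt{|B(s)|}$, of the Gaussian vector $(\varepsilon_t)_{t\in B(s)}$. Concretely, $|X_s(\varepsilon)-X_s(\varepsilon')|\le\sum_t|\varepsilon_t-\varepsilon'_t|\le\sqrt{|B(s)|}\,\|\varepsilon-\varepsilon'\|_2$. Gaussian concentration for Lipschitz functions (Tsirelson--Ibragimov--Sudakov) then shows that $X_s-\mu_s$ is sub-Gaussian with variance proxy $\sigma^2|B(s)|\le\sigma^2 m$.

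A simpler route also works. Each $\max(\varepsilon_t,0)-\sigma/\sqrt{2\pi}$ is a $\sigma$-Lipschitz function of a standard normal, so it has moment generating function at most $e^{\lambda^2\sigma^2/2}$. Independence within the row then gives proxy $\sigma^2|B(s)|$ directly.

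The standard argument finishes it. By Jensen,
\[
\mathbb{E}\max_s(X_s-\mu_s)\le\frac1\lambda\log\sum_s\mathbb{E}e^{\lambda(X_s-\mu_s)}\le\frac{\log R}{\lambda}+\frac{\lambda\sigma^2 m}{2}.
\]
Optimizing at $\lambda=\sqrt{2\log R/(\sigma^2 m)}$ yields $\sigma\sqrt{2m\log R}$. This step does not require independence across rows, although the rows are in fact disjoint token blocks.

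\textbf{Main obstacle.} The only nontrivial point is justifying the sub-Gaussian MGF bound for the clipped, centred variables with proxy $\sigma^2$ rather than something larger. I would cite the Gaussian Lipschitz concentration theorem for this. Alternatively, one can note that the clipped variable is a $1$-Lipschitz image of a Gaussian and use the Gaussian log-Sobolev (Herbst) argument, which gives exactly proxy $\sigma^2$.
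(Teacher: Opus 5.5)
Your proof is correct and takes the approach the form of the bound points to. The paper states the proposition without a proof, so this is a comparison with that form rather than with a written argument: linearity and the half-normal mean $\sigma/\sqrt{2\pi}$ give $\mathbb{E}[\widehat{\mathrm{AP}}_\Sigma]$ exactly, and for $\widehat{\mathrm{AP}}_{\max}$ you centre at the largest row mean $\sigma m/\sqrt{2\pi}$, show each centred row sum is sub-Gaussian with proxy $\sigma^2 m$, and apply the log-sum-exp maximal inequality to get $\sigma\sqrt{2m\log R}$. You are right that the sharp constant needs the moment-generating-function form of Gaussian Lipschitz concentration (Herbst/log-Sobolev) rather than only a tail bound, and that no independence across rows is required.
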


\subsection{Choosing and Locating}
\label{sec:method:choosing}

A \emph{pool}~$Q$ is a set of candidate programs for one task from one
generator, optionally restricted to those passing functional
tests~(\ding{185}).  A rule~$g$ assigns each program a score (higher
means more likely insecure). Its pick is $\arg\min_{P \in Q} g(P)$.
Choosing depends only on the within-task order of scores:

\begin{lemma}[Choosing sees only within-task order]
\label{lem:task_shift}
Let $\phi_x$ be strictly increasing for every task~$x$ and define
$g'(P) = \phi_{x(P)}(g(P))$.  Then $\mathrm{pick}_{g'}(Q) =
\mathrm{pick}_g(Q)$ for every pool, so the secure-pick rate and
within-task AUROC are unchanged.
\end{lemma}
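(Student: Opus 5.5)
The argument is pure order theory. Every quantity in the statement depends on the scores only through their within-pool or within-task order, and a strictly increasing map preserves that order. I would prove it in three short steps.

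\textbf{Step 1: a pool lies in one task.} Every pool $Q$ consists of candidates for a single task, as defined in \S\ref{sec:method:choosing}. So $x(P)=x$ is the same for all $P\in Q$, and on $Q$ we have $g'=\phi_x\circ g$ with a single strictly increasing $\phi_x$. Strict monotonicity gives, for $P,P'\in Q$,
\[
g(P)<g(P') \iff g'(P)<g'(P') \qquad\text{and}\qquad g(P)=g(P') \iff g'(P)=g'(P').
\]
The second equivalence needs injectivity of $\phi_x$, which strict monotonicity supplies.

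\textbf{Step 2: the pick is unchanged.} By Step 1, $P^\ast$ minimizes $g$ on $Q$ if and only if it minimizes $g'$ on $Q$. Hence the argmin sets coincide: $\arg\min_{Q} g' = \arg\min_{Q} g$.

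Ties are the main subtlety. I would fix the tie-breaking rule to depend only on the argmin set and on data independent of the score values, such as candidate index or a fixed random seed. Then equal argmin sets give the same pick. If ties are instead broken uniformly at random, the two rules induce the same distribution over picks.

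\textbf{Step 3: the metrics are unchanged.} The secure-pick rate is the average over pools of $\mathrm{Sec}(\mathrm{pick}(Q))$ (or the expected value under random tie-breaking). By Step 2 this is identical for $g$ and $g'$.

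For within-task AUROC, write it as the Mann--Whitney statistic computed over pairs of one insecure and one secure program from the same task:
\[
\Pr\bigl[g(P_{\text{ins}})>g(P_{\text{sec}})\bigr]+\tfrac12\Pr\bigl[g(P_{\text{ins}})=g(P_{\text{sec}})\bigr].
\]
Each indicator depends only on the order and equality relations of the two scores within one task. Step 1 shows these relations are preserved, so every term, and hence every per-task AUROC and any fixed weighted average of them, is unchanged.

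The only point requiring care is that $\phi$ may differ across tasks. This is harmless exactly because no comparison ever crosses tasks, and the proof should state that explicitly. It is also what makes the lemma address C3: a pooled cross-task AUROC would not be invariant under such maps.
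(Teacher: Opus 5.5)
Your proof is correct and follows the same idea the paper gives in the remark after the lemma: a pool's candidates all share one task, so a single strictly increasing $\phi_x$ acts on the whole pool and on every within-task pair. That map preserves the argmin and every pairwise order and tie relation, so the secure-pick rate and within-task AUROC are unchanged. Your explicit handling of ties (a tie-breaking rule that does not depend on score values, which requires injectivity of $\phi_x$) is a detail the paper leaves implicit, and it is the right way to make the claim precise.
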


\noindent
This invariance is why \system{} ranks within a task.  Tasks whose security
tests check an unstated limit shift every code-level judge in the same
direction for all of that task's candidates. Such shifts change scores
across tasks but cannot move a pick, because the candidates share the task.

The within-task AUROC bounds how often a pick succeeds:

\begin{theorem}[Secure picks from within-task ranking]
\label{thm:picks}
Fix a task and let scores within the secure and insecure groups each be
i.i.d., continuous, and independent of each other, with within-task
AUROC~$A$.  For a pool with $g \ge 1$ good and $b \ge 1$ bad candidates,
\begin{equation}
  A^b \;\le\; \Pr[\text{pick is good}]
      \;\le\; 1 - (1 - A)^g,
  \label{eq:pick_bounds}
\end{equation}
with equality when $g = b = 1$.
\end{theorem}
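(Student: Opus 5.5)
The plan is to write the pick event explicitly and bound it from each side by a single-candidate event, with Jensen's inequality turning a conditional probability raised to a power into the power of the AUROC. Let the good candidates have scores $G_1,\dots,G_g$, i.i.d.\ with continuous law $F$, and the bad candidates have scores $B_1,\dots,B_b$, i.i.d.\ with continuous law $H$, with the two groups independent. Since higher scores mean more likely insecure, the within-task AUROC is $A=\Pr[B_1>G_1]$. Continuity makes ties occur with probability zero, so the pick (the $\arg\min$) is a.s.\ unique. The pick is good exactly when
\[
\min_i G_i<\min_j B_j .
\]

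For the lower bound, I will use that the pick is good whenever $G_1$ beats every bad score, so
\[
\Pr[\text{good}]\ge \Pr[G_1<B_j\ \forall j].
\]
Conditioning on $G_1$ and using that the $B_j$ are i.i.d.\ and independent of $G_1$, this equals $\mathbb{E}[S(G_1)^b]$, where $S(u)=\Pr[B_1>u]$. The tower property gives $\mathbb{E}[S(G_1)]=A$. Because $x\mapsto x^b$ is convex on $[0,1]$ for $b\ge1$, Jensen's inequality gives $\mathbb{E}[S(G_1)^b]\ge A^b$.

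The upper bound is the symmetric argument applied to the complementary event. The pick is bad whenever $B_1$ lies below every good score, so
\[
\Pr[\text{bad}]\ge \mathbb{E}[T(B_1)^g],
\]
where $T(u)=\Pr[G_1>u]$. By continuity, $\mathbb{E}[T(B_1)]=\Pr[G_1>B_1]=1-A$, and Jensen gives $\Pr[\text{bad}]\ge(1-A)^g$. Hence $\Pr[\text{good}]\le 1-(1-A)^g$.

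For equality, take $g=b=1$. Then $\Pr[\text{good}]=\Pr[G_1<B_1]=A$, which coincides with both $A^1$ and $1-(1-A)^1$. There is no real obstacle here. The only care needed is the tie and continuity bookkeeping: it ensures that strict and non-strict inequalities give the same probabilities, and that $\Pr[\text{good}]+\Pr[\text{bad}]=1$.
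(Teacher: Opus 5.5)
Your proof is correct and is essentially the same argument as the paper's. I can't check this against an explicit proof, because none appears in the source given; this is the standard argument that the forms $A^b$ and $1-(1-A)^g$ point to. You restrict to the event that one designated good (respectively bad) candidate beats the whole opposite group, condition on its score to get $\mathbb{E}[S(G_1)^b]$ (respectively $\mathbb{E}[T(B_1)^g]$), and apply Jensen with $\mathbb{E}[S(G_1)]=A$ and $\mathbb{E}[T(B_1)]=1-A$; continuity rules out ties and gives $\Pr[\text{good}]=A$ when $g=b=1$.
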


\noindent

Locating ranks the statements of a single insecure program by
$\mathrm{AP}(s)$, which by Theorem~\ref{thm:main} is equivalent to
ranking by~$\pi_s$, the probability that the security fix begins in
statement~$s$.  We measure localization with hit@$k$ (\S\ref{sec:setup}).

\subsection{Ground Truth: Labels Independent of the Method}
\label{sec:method:labels}

Labels are constructed in step~\ding{185}, entirely independently of
audit pressure.

\noindent\textbf{Program-level labels.}
Every program runs inside a sandboxed executor (no network, read-only
filesystem, memory and time limits).  A program is a security
hallucination if it passes all functional tests and fails at least one
security test.  Programs the sandbox could not measure are excluded.

\noindent\textbf{Statement-level labels.}
On generated code, a statement is labeled vulnerable when a
static-analyzer finding in the task's CWE family sits on it and the
program failed its security test:
\begin{equation}
  y_\text{conf}(s) =
  \begin{cases}
    1 & \text{if } \mathrm{Sec}(P) = 0 \text{ and } m(s) = 1,\\
    0 & \text{if } \mathrm{Sec}(P) = 1, \text{ or }
        \mathrm{Sec}(P) = 0,\; m(s) = 0,\; \mathcal{M}(P) \neq \varnothing,\\
    \bot & \text{otherwise (excluded)},
  \end{cases}
  \label{eq:yconf}
\end{equation}
where $m(s) = 1$ iff a analyzer finding whose CWE family
intersects~$\mathcal{C}$ is attached to~$s$.  The analyzer says where,
and the executed security test says whether.
On benchmark reference pairs that differ by a known edit, the vulnerable
lines are those the line diff replaces or deletes, requiring no analyzer.

Finally, step ~\ding{186} evaluates~\system{} with five research questions. We discuss the evaluation in~\S\ref{sec:results}. 

\noindent\textbf{AI Usage Note.}
AI coding assistants (such as Claude Opus) were used for implementation and debugging of the main framework of \system{} and for plotting data. All AI-assisted implementations were reviewed by the authors, and reported numbers come from executed runs of the code.

%% file: sections/exp_setup.tex
\section{Experimental Setup}
\label{sec:setup}

A security hallucination can only be established by executing a security
test.  Several widely used security benchmarks provide prompts without
executable tests (SecurityEval~\cite{siddiq2022seceval},
LLMSecEval~\cite{tony2023llmseceval}) and some provide only functional tests (HumanEval+~\cite{liu2024your}). On the latter, insecurity could only be
estimated by static analysis, which misses most security hallucinations
(\S\ref{sec:rq1}).  We therefore evaluate on the 899 tasks that carry both
functional and security tests: SecCodePLT~\cite{yang2024seccodeplt}
(819 Python tasks), CWEval~\cite{zhang2025cweval} (64 tasks in Python, C,
and C++), and 16 tasks we authored for CWE types under-represented in the
public benchmarks.


\noindent\textbf{Authored tasks.}  The 16 new tasks we add span six vulnerability
families: cryptography (password hashing, encryption with
randomized IV---CWE-327/916/759, CWE-329/1204), injection (SQL
injection, XSS sanitization, command injection, unsafe YAML
deserialization---CWE-89, CWE-79/94, CWE-78/88, CWE-502), file handling
(upload validation, zip-slip extraction---CWE-434/22, CWE-22/409),
authentication (SSL certificate verification, API request signing, JWT
verification, password-reset tokens---CWE-295/297, CWE-522/798,
CWE-347/345, CWE-330/613), input validation (email regex
DoS---CWE-1333), and memory safety (C++ fixed-buffer copy---CWE-120/787).
Each task with executable tests carries both a functional and a security
test, and the security test checks an observable consequence (e.g., ``two
hashes of the same password must differ'') rather than a specific API, so
it accepts any correct implementation.  

\noindent\textbf{Generators.}  We generate code with 11 open models spanning six
families (Qwen2.5-Coder
1.5B/7B/32B, Qwen3.6-27B, DeepSeek-Coder 6.7B/33B/V2-Lite-16B,
Llama-3.1-8B, Granite-3.3-8B, GLM-4.7-Flash-30B, Nemotron-3-Nano-4B). Each model generates at $T \in \{0.2, 0.8\}$ with
three seeds, producing six samples per task per model. A fast API model (GPT-4.1-mini) also falls under this setting. Two frontier
models (Claude Opus~5, GPT-5.6~Sol), and two agents (Qwen2.5-Coder-7B
and Qwen3-8B in a 10-step tool loop) are tested over the 80
CWEval and authored
tasks at $T = 0.8$. Open models are HuggingFace checkpoints run locally with vLLM~\cite{kwon2023efficient} (bf16 up to 8B parameters, 4-bit AWQ above).
GPT-4.1-mini and GPT-5.6~Sol, and Claude
Opus~5 are accessed through their respective APIs~\cite{openaiapi,anthropicapi}.  All local models, including the auditor, run on a single NVIDIA RTX~3090 (24\,GB) each, on a machine with 3x such GPUs, 256~GB RAM, and AMD Ryzen 24-core CPU.

\noindent\textbf{Testing.}  Every program runs in a Bubblewrap sandbox (no network,
read-only filesystem, memory and time limits)~\cite{bubblewrap}.  
Programs the sandbox could
not measure are excluded and never counted as failures.  Of 64,137
measured programs, 38,774 work; 14,495 of those are
insecure.


\noindent\textbf{Auditor.}  We use two variations of auditing: (1) self-audit and (2) fixed audit. For the rest of the paper, we call them as AP (self) and AP (fixed). Qwen2.5-Coder-32B-Instruct (AWQ 4-bit) is used as the fixed auditor. Audit Pressure needs two
prompts per program, so a batch of 1,000
 programs takes approximately 20--35 minutes on one RTX 3090 GPU.  
Every setting (frame wording, score definition ($\text{AP}_{\max}$), auditor
identity, comment removal) was frozen on a pilot run over 70\% of
tasks (selected by a deterministic hash of the task identifier).  The
main run was scored once and no setting was adjusted after seeing main-run
results.  The remaining 30\% (``held-out tasks'') were never examined
during any design decision.

\noindent\textbf{Baselines.} We use the following baselines: (1) Random pick; lowest
generation uncertainty (the generator's per-token log-probability,
averaged); most check lines (\texttt{if}/\texttt{raise}/\texttt{assert}/\texttt{try}/\texttt{throw}
count); fewest static-analyzer findings in the task's CWE family
(Bandit, Semgrep, cppcheck, Flawfinder); AP~(self) (each open generator
audits its own code); 32B~yes/no (the auditor's one-token
$P(\text{Yes})$ verdict); GPT-5.6~Sol as an LLM judge that names
vulnerable lines. Because the judge is an API model, we ran it on a seeded random sample of 148 tasks with 6 generators (5,276 programs, 398 insecure programs with confirmed vulnerable statements). All comparisons with the judge use exactly this sample.

\noindent\textbf{Evaluation protocol.} We define the following terminologies for our evaluation: (i)~\emph{Pool}: A set of candidates
for one task from one model. We report results on
\emph{pools where the choice matters}, i.e., those including at least one secure
and one insecure working program. (ii)~\emph{Secure picks}: share of pools whose pick is secure and working.  (iii)~\emph{Gap closed}: the gain over random choice divided by $(100\% - \text{random})$. 
(iv)~\emph{Hit@$k$}: the vulnerable statement is among the top~$k$.

%% file: sections/results.tex
\section{Evaluation}
\label{sec:results}

We answer five research questions to evaluate~\system{}:
\begin{itemize}[nosep]
  \item \textbf{RQ1.} Why do existing signals miss security hallucinations? (\S\ref{sec:rq1})
  \item \textbf{RQ2.} Can audit pressure choose the secure program among a model's samples? (\S\ref{sec:rq2})
  \item \textbf{RQ3.} Does audit pressure locate the vulnerable statement? (\S\ref{sec:rq3})
  \item \textbf{RQ4.} How does it compare with what practitioners would use instead? (\S\ref{sec:rq4})
  \item \textbf{RQ5.} How robust is~\system{} across various settings? (\S\ref{sec:rq5})
\end{itemize}

\subsection{RQ1: Why Do Existing Signals Miss Security Hallucinations?}
\label{sec:rq1}

\begin{figure*}[t]
  \centering
  \includegraphics[width=0.85\textwidth]{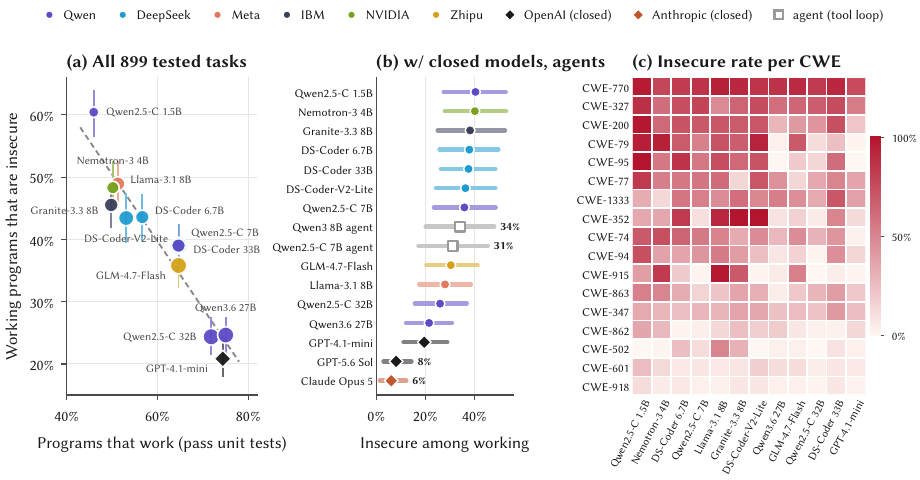}
  \vspace{-5mm}
  \caption{(a)~Per generator: share that work vs. share insecure
    (899 tasks, 95\% intervals).  (b)~Adding closed models and agents
    on 80 tasks.  (c)~Insecure rate by CWE and generator.}
  \label{fig:landscape}
  \vspace{-3mm}
\end{figure*}

\begin{figure}[t]
  \centering
  \includegraphics[width=0.85\columnwidth]{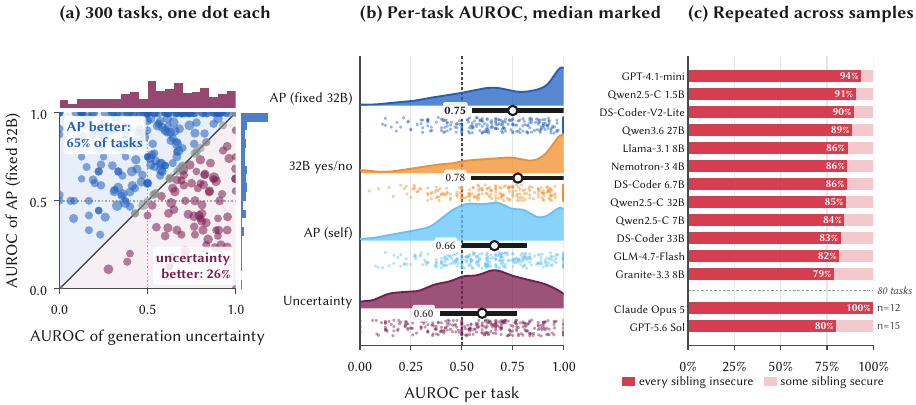}
  \vspace{-4mm}
  \caption{(a)~Per-task AUROC: AP~(fixed 32B) vs uncertainty (300 tasks).
    (b)~Distribution across four methods.  (c)~Insecure programs whose
    siblings are all insecure.}
  \label{fig:confident}
  \vspace{-3mm}
\end{figure}

\textbf{Every model writes security hallucinations.}
Figure~\ref{fig:landscape}(a) shows the insecure share of working
programs across 899 tested tasks: 24--60\% for the 11 open generators,
21\% for GPT-4.1-mini.  Frontier models are safer (Claude Opus~5: 6\%,
GPT-5.6~Sol: 8\% on 80 tasks), but agents keep roughly a third of their
working solutions insecure (31\% and 34\%, Figure~\ref{fig:landscape}b), and the generator's own uncertainty over
the agent's edits does not flag them (AUROC~0.40).

\noindent\textbf{The task as decider.}
Task identity explains 57\% of the variance in insecurity; the generator
only 6\%. The CWE alone predicts whether a working program is insecure at AUROC~0.77 (Figure~\ref{fig:landscape}c).

\noindent\textbf{Security hallucinations are consistent.}
Of \numprint{14495} insecure working programs, 86\% have only insecure
siblings---other samples from the same generator, task, and
temperature (Figure~\ref{fig:confident}c)---defeating detectors that rely
on output
variation~\cite{farquhar2024semantic,manakul2023selfcheckgpt,mundler2024self}. Frontier models (GPT-5.6-Sol, Claude Opus 5) hallucinate less, but when a candidate program from them contains security hallucinations, their siblings also contain them prevalently. Fig~\ref{fig:confident}c depicts the scenario.

\noindent\textbf{Generation uncertainty fails at both levels.}
Uncertainty ranks insecure programs at AUROC~0.61 overall and 0.47 on CWEval, where confident code is more often insecure.  On 300 tasks
with enough same-model pairs, AP~(fixed 32B) ranks better in 65\% of
tasks (Figure~\ref{fig:confident}a,\,b).  At the statement level,
uncertainty places the vulnerable statement first in only 5\% of
programs, below the 9\% expected by random order
(Table~\ref{tab:localization}).  Static analyzers cover 15\% of security
hallucinations---and 12\% of secure programs, suggesting high false positive rate.
\vspace{-2mm}
\begin{takeaway}
\textbf{Summary:}
Security hallucinations affect every model (24--60\% open, 6--8\%
frontier) and are written confidently: 86\% have only insecure siblings.
Generation uncertainty (AUROC~0.61) and static
analyzers (15\% coverage) both miss most of them.
\end{takeaway}

\subsection{RQ2: Can Audit Pressure Choose the Secure Program?}
\label{sec:rq2}

\begin{figure*}[t]
  \centering
  \includegraphics[width=0.8\textwidth]{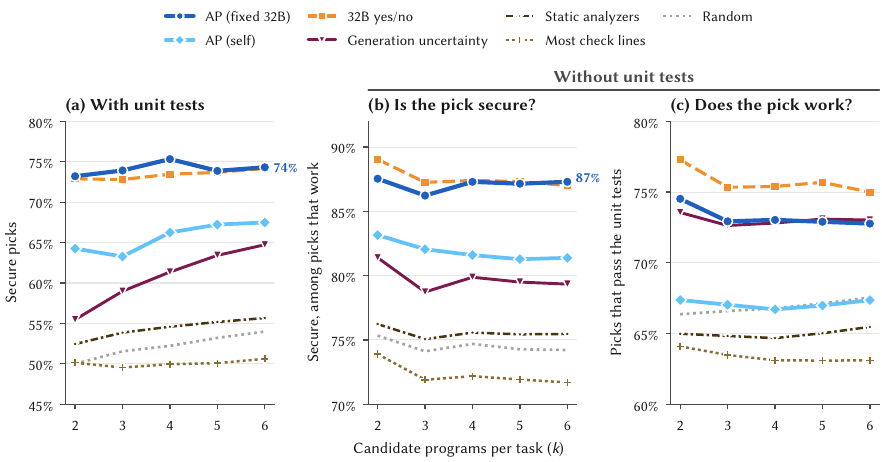}
  \vspace{-4mm}
  \caption{Secure picks among $k$ candidates (11 generators, decisive
    pools).  (a)~With unit tests.  Without: (b)~security of working
    picks and (c)~share of programs that work.}
  \label{fig:selection}
\end{figure*}

\begin{figure}[t]
  \centering
  \includegraphics[width=0.80\columnwidth]{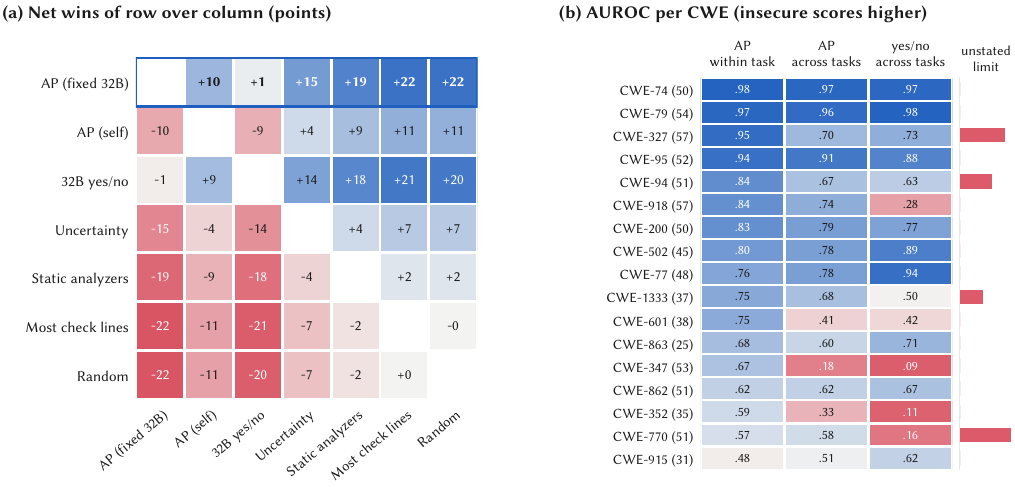}
  \vspace{-4mm}
  \caption{(a)~Net wins in percentage points of row over column
    (1,305 pools).  (b)~AUROC per CWE (17 CWEs, $\geq$25 tasks); bar:
    tasks testing an unstated limit.}
  \label{fig:head_to_head}
  \vspace{-3mm}
\end{figure}

All results use the 11 open generators, unit tests run first, pools
where the choice matters.  A random pick selects secure code 54.0\% of
the time at $k=6$ and 50.0\% at $k=2$.

\noindent\textbf{Capable models audit their own code.}
AP~(self) beats the generator's own uncertainty for 8 of 11 generators
(Table~\ref{tab:big}), but small models fail at self-audit: AP~(self)
falls below uncertainty for Qwen2.5-Coder-1.5B (50 vs 66\%),
DeepSeek-Coder-V2-Lite (57 vs 64\%), and Granite-3.3-8B (58 vs 67\%),
suggesting a fixed auditor can be more useful for smaller models.

\noindent\textbf{A fixed auditor does best.}
At $k=6$, AP~(fixed 32B) reaches \textbf{74.3\%} secure picks,
32B~yes/no 74.2\%, AP~(self) 67.5\%, uncertainty 64.7\%, static
analyzers 55.7\%, random 54.0\% (Figure~\ref{fig:selection}a). The
advantage appears already at $k=2$ (73.2\% vs 55.5\%).  Over 1,305
decisive pools, AP~(fixed 32B) gains $+$21.6 points over random
[18.8,\,24.1], closing 46\% of the gap to a perfect pick, and nets
$+$15 over uncertainty and $+$19 over static analyzers head to head
(Figure~\ref{fig:head_to_head}a).  AP~(fixed 32B) is best for 7 of 11 generators on choosing, 32B~yes/no for 4 (Table~\ref{tab:big}).
Removing the 32B auditor's own programs leaves the result unchanged
(74.5\% of 1,218 pools vs 52.6\% random). The bound of Theorem~\ref{thm:picks} holds in practice: in pools with a single insecure candidate, where it guarantees at least the within-task AUROC (0.72), AP (fixed 32B) picks the secure program 86\% of the time.

\noindent\textbf{The yes/no verdict ties for choosing but cannot locate.}
The 32B auditor's yes/no answer matches AP for choosing (74.2\% vs
74.3\%, net wins $+$1), but gives one bit per program.  The distinction
matters in RQ3, where audit pressure's per-statement ranking enables
localization.

\noindent\textbf{Without unit tests}, a pick must both work and be secure.
Among picks that work, AP~(fixed 32B) chooses secure code 87\% of the
time and AP~(self) 81\%, against 79\% for uncertainty and 74\% for
random (Figure~\ref{fig:selection}b).  Uncertainty's advantage lies in
avoiding programs that fail their tests
(Figure~\ref{fig:selection}c), the job unit tests normally do.

\vspace{-2mm}
\begin{takeaway}
\textbf{Summary:}
AP~(fixed 32B) selects secure code in 74\% of decisive pools at $k=6$,
closing 46\% of the gap to a perfect pick (+21.6\,pp over random,
+15 net wins over uncertainty).  The 32B yes/no verdict ties for
choosing but cannot locate.
\end{takeaway}

\begin{figure}[t]
  \centering
  \includegraphics[width=0.8\columnwidth]{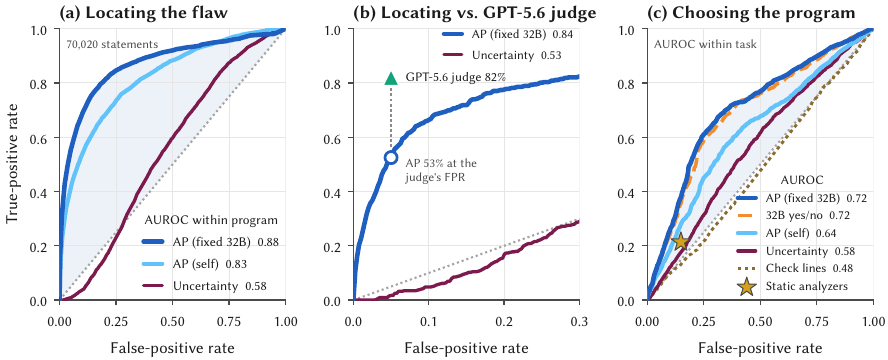}
  \vspace{-4mm}
  \caption{(a)~Statement ROC (3,910 programs from the 11 open generators).  (b)~Judge's sample at
    low FPR; circle: AP at the judge's rate.  (c)~Program ROC within a
    task; star: static analyzers.}
  \label{fig:roc}
\end{figure}

\subsection{RQ3: Does Audit Pressure Locate the Vulnerable Statement?}
\label{sec:rq3}

\begin{figure}[t]
  \centering
  \includegraphics[width=0.8\columnwidth]{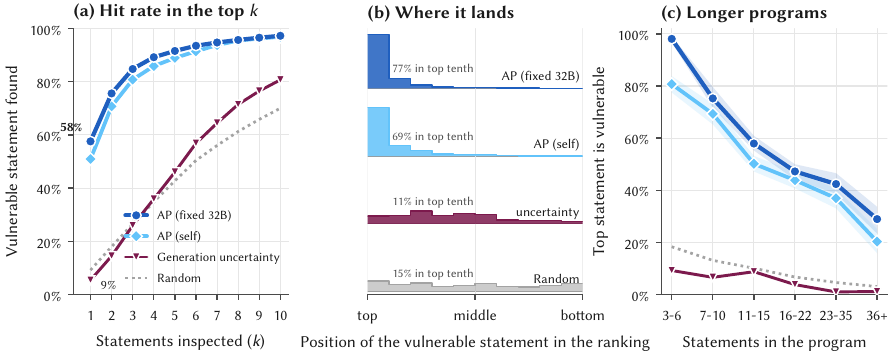}
  \vspace{-4mm}
  \caption{\footnotesize (a)~Hit@$k$ (4,091 insecure programs, all generators).  (b)~Where vulnerable
    statement lands.  (c)~Top-1 by prog. length.}
  \label{fig:localization}
  \vspace{-4mm}
\end{figure}

\input{tables/t3_localization}

On \numprint{4091} insecure generated programs, AP~(fixed 32B) places the
vulnerable statement first in \textbf{58\%} of cases (85\% at top~3,
97\% at top~10; Table~\ref{tab:localization}).  AP~(self) reaches 51\% at
top~1, uncertainty only 5\%---below the 9\% from random order.  Statement
AUROC: 0.88 for AP~(fixed 32B), 0.83 for AP~(self), 0.58 for uncertainty
(Figure~\ref{fig:roc}a).  AP~(fixed 32B) is best for 9 of 11 generators
on locating; AP~(self) for 3 (Table~\ref{tab:big}).
Top-1 hit rate degrades gracefully with program size: 98\% for 3--6
statements, 58\% for 11--15, 29\% for 36+ (random: 3\%;
Figure~\ref{fig:localization}c).

\noindent\textbf{Reference solutions.}  On 808 reference pairs from
SecCodePLT---where the label is the exact changed line, no analyzer
involved---AP~(fixed 32B) identifies it in \textbf{77\%} of pairs
(random: 27\%; within one line: 89\% vs 61\%).

\noindent\textbf{The frontier judge locates better on its sample.}  On the
judge's sample (148 tasks $\times$ 6 generators), GPT-5.6~Sol places the
vulnerable statement first in 70\% of insecure programs, vs AP~(fixed
32B) 54\% and AP~(self) 57\% (Table~\ref{tab:localization}).  At 5\%
false positives: 82\% vs 53\% (Figure~\ref{fig:roc}b).  AP ranks every
statement of all \numprint{4091} insecure programs from the same pass
that also chooses---and as RQ4 shows, it chooses better.

\begin{table}[t]
\centering\scriptsize\setlength{\tabcolsep}{2.5pt}
\caption{Every method on every open generator. \emph{Choose}: secure picks (\%). \emph{Locate}: top-1 hit (\%); yes/no cannot locate. Best per row in bold. Locate uses the 3,910 insecure programs of the 11 open generators.}

\vspace{-4mm}
\label{tab:big}
\begin{tabular}{l r rrrr rr @{\hspace{7pt}} r rrr rr}
\toprule
\multicolumn{1}{c}{} & \multicolumn{7}{c}{Choose: secure picks (\%)} & \multicolumn{6}{c}{Locate: top 1 (\%)} \\
\cmidrule(lr){2-8}\cmidrule(lr){9-14}
 & & \multicolumn{4}{c}{Baselines} & \multicolumn{2}{c}{\textbf{AP}} & & \multicolumn{3}{c}{Baselines} & \multicolumn{2}{c}{\textbf{AP}} \\
\cmidrule(lr){3-6}\cmidrule(lr){7-8}\cmidrule(lr){10-12}\cmidrule(lr){13-14}
Generator & $n$ & Rand. & Static & Unc. & \shortstack{32B\\yes/no} & Self & \shortstack{Fixed\\32B} & $n$ & Rand. & Unc. & \shortstack{32B\\yes/no} & Self & \shortstack{Fixed\\32B} \\
\midrule
Qwen2.5-C-1.5B & 104 & 56 & 57 & 66 & 78 & 50 & \textbf{84} & 549 & 10 & 2 & -- & 41 & \textbf{71} \\
Nemotron-3-4B & 110 & 47 & 47 & 55 & 68 & 56 & \textbf{75} & 304 & 10 & 5 & -- & 54 & \textbf{71} \\
DS-Coder-6.7B & 136 & 51 & 55 & 52 & \textbf{74} & 64 & 72 & 477 & 11 & 9 & -- & 53 & \textbf{60} \\
Qwen2.5-C-7B & 135 & 53 & 61 & 62 & 78 & 76 & \textbf{81} & 362 & 9 & 7 & -- & 54 & \textbf{68} \\
Granite-3.3-8B & 153 & 50 & 52 & 67 & \textbf{73} & 58 & 68 & 475 & 10 & 9 & -- & \textbf{55} & 50 \\
Llama-3.1-8B & 132 & 50 & 50 & 52 & 73 & 61 & \textbf{77} & 520 & 7 & 3 & -- & \textbf{60} & 45 \\
DS-Coder-V2-Lite & 84 & 54 & 60 & 64 & \textbf{69} & 57 & 68 & 374 & 10 & 3 & -- & 47 & \textbf{51} \\
Qwen3.6-27B & 68 & 57 & 56 & 60 & 71 & 68 & \textbf{76} & 129 & 5 & 5 & -- & 25 & \textbf{47} \\
GLM-4.7-Flash & 151 & 55 & 56 & 62 & \textbf{75} & 67 & 70 & 163 & 8 & 2 & -- & 38 & \textbf{53} \\
Qwen2.5-C-32B & 87 & 56 & 58 & 64 & 68 & \textbf{72} & \textbf{72} & 166 & 10 & 4 & -- & \textbf{61} & \textbf{61} \\
DS-Coder-33B & 145 & 54 & 58 & 54 & 75 & 74 & \textbf{77} & 391 & 10 & 10 & -- & 55 & \textbf{67} \\
\midrule
All generators & 1,305 & 53 & 55 & 60 & 73 & 64 & \textbf{74} & 3,910 & 9 & 6 & -- & 51 & \textbf{59} \\
Held-out tasks & 392 & 52 & 54 & 62 & \textbf{72} & 61 & 71 & 1,239 & 11 & 7 & -- & 57 & \textbf{63} \\
\midrule
Wins (of 11) &  & 0 & 0 & 0 & 4 & 1 & 7 &  & 0 & 0 & -- & 3 & 9 \\
\bottomrule
\end{tabular}
\end{table}

\vspace{-2mm}
\begin{takeaway}
\textbf{Summary:}
AP~(fixed 32B) places the vulnerable statement first in 58\% of insecure
programs (random: 9\%) and identifies the exact changed line in 77\% of
reference pairs.  Generation uncertainty fails at localization entirely
(5\% top-1, below random).
\end{takeaway}

\subsection{RQ4: Comparison with Practitioner Alternatives}
\label{sec:rq4}

\input{tables/t4_alternatives}

\begin{figure}[t]
  \centering
  \includegraphics[width=0.8\columnwidth]{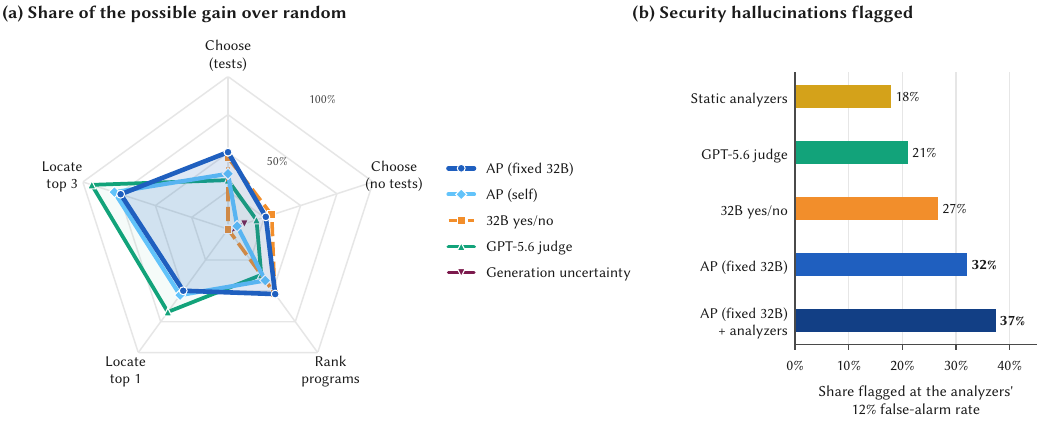}
  \vspace{-4mm}
  \caption{Judge's sample (148 tasks $\times$ 6 generators).
    (a)~Share of the possible gain over random on five measures.
    (b)~Hallucinations flagged at the analyzers' 12\% false-alarm rate.}
  \label{fig:alternatives}
  \vspace{-3mm}
\end{figure}

Table~\ref{tab:alternatives} compares all methods on the judge's sample
(148 tasks $\times$ 6 generators).  With unit tests (120 decisive
pools), AP~(fixed 32B) picks secure code in \textbf{76.7\%}
[69.2,\,83.3] of pools, the GPT-5.6~Sol judge 68.0\%
[62.7,\,73.6], uncertainty 52.5\%, random 52.9\%.  AP picks secure code
more often, but the intervals overlap
(Figure~\ref{fig:alternatives}a).

\noindent\textbf{Beside static analyzers.}  At the analyzers' false-alarm rate
(12\%, 3,506 working programs on the judge's sample), static analyzers
flag 18\% of security hallucinations, the GPT-5.6~Sol judge 21\%,
32B~yes/no 27\%, AP~(fixed 32B) 32\%, and AP combined with analyzers
37\% (Figure~\ref{fig:alternatives}b).

\noindent\textbf{Cost and generality.}  AP requires two local forward passes per
program ($\sim$20--35 min/1k on one 24\,GB GPU); the judge costs
\$7.3/1k through an API.  Because AP reads only the finished code, it
needs no generator probabilities: on GPT-4.1-mini's 168 insecure
programs it places the vulnerable statement first 24\% of the time
(random: 5\%), and on agentic code it picks secure code in 76\% of 25
pools without tests (random: 57\%).

\begin{table}[t]
\centering
\caption{Qualitative comparison.  \cmark~= supported,
  \pmark~= partial, \xmark~= not supported. *~= open-weight generators.}
  \vspace{-4mm}
\label{tab:qualitative}
\scriptsize
\begin{tabular}{@{}lcccccc@{}}
\toprule
\textbf{Method}
  & \textbf{Selects}
  & \textbf{Localizes}
  & \textbf{Statement Rank}
  & \textbf{Any Gen.}
  & \textbf{No train.}
  & \textbf{Local Infer.} \\
\midrule
Static Tools~\cite{cppcheck2024, semgrep2024,bandit2024,wheeler2024flawfinder} & \xmark & \pmark & \xmark & \cmark & \cmark & \cmark \\
Gen.\ Uncertainty~\cite{farquhar2024semantic} & \pmark & \xmark & \xmark & \xmark & \cmark & \cmark \\
SVEN~\cite{he2023large}  & \xmark & \xmark & \xmark & \xmark & \xmark & \cmark \\
LLM judge~\cite{gptsol26}       & \pmark & \cmark & \xmark & \cmark & \cmark & \xmark \\
32B yes/no~\cite{hui2024qwen25coder}      & \cmark & \xmark & \xmark & \cmark & \cmark & \cmark \\
\midrule
\textbf{AP (self)}       & \cmark & \cmark & \cmark & \pmark* & \cmark & \cmark \\
\textbf{AP (fixed 32B)} & \cmark & \cmark & \cmark & \cmark & \cmark & \cmark \\
\bottomrule
\end{tabular}
\vspace{-3mm}
\end{table}

\vspace{-2mm}
\begin{takeaway}
\textbf{Summary:}
AP selects secure code more often than a GPT-5.6~Sol judge (77\% vs.\
68\%) at the cost of local GPU time instead of \$7.3/1k API calls.  The
judge localizes better on its sample (70\% vs.\ 54\% top-1).  AP~(fixed
32B) is the only method that selects, localizes with a per-statement
ranking, and works on code from any generator without training
(Table~\ref{tab:qualitative}).
\end{takeaway}

\subsection{RQ5: Robustness \& Generalization}
\label{sec:rq5}

\begin{figure}[t]
  \centering
  \includegraphics[width=0.80\columnwidth]{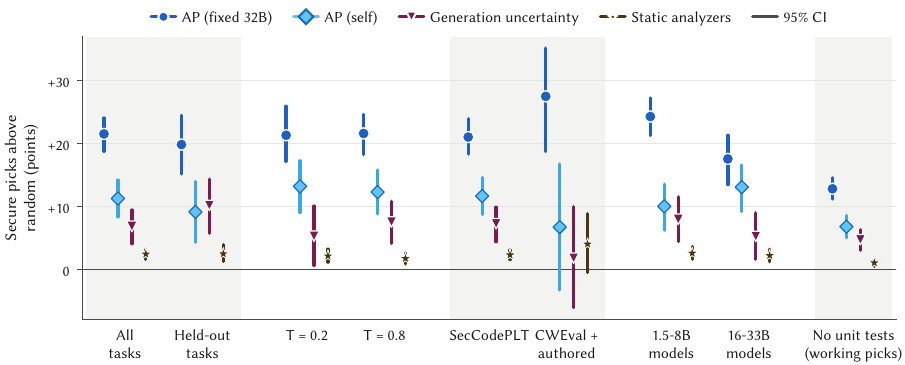}
  \vspace{-4mm}
  \caption{\footnotesize Gain over random in secure picks (pp, 95\% intervals) per
    setting.  Last column: without tests, working picks only.}
  \label{fig:forest}
  \vspace{-3mm}
\end{figure}

\textbf{Held-out tasks.}  On the 30\% of tasks never used to choose any
setting (392 pools), AP~(fixed 32B) gains $+$19.9 points over random
[15.2,\,24.5], within two points of the main result
(Figure~\ref{fig:forest}).

\noindent\textbf{Temperature and benchmarks.}
The gain over random is stable across temperatures ($+$21.3 at $T=0.2$,
$+$21.6 at $T=0.8$) and benchmarks ($+$21.1 on SecCodePLT, $+$27.5
[18.8,\,35.2] on CWEval and authored tasks, 101 pools;
Figure~\ref{fig:forest}).

\noindent\textbf{Model size.}  AP~(fixed 32B) gains $+$24.3 points for 1.5--8B
generators and $+$17.6 for 16--33B, helping more where the generator
is weaker.

\noindent\textbf{Per CWE.}  Within each task, AP~(fixed 32B) beats chance in 16
of 17 CWEs with at least 25 tasks (the exception is CWE-915,
AUROC~0.48; Figure~\ref{fig:head_to_head}b).  Reversals occur across
tasks for CWE-347 (AP 0.18, yes/no 0.09) and CWE-352 (0.33, 0.11), and
for the yes/no answer on CWE-770 (0.16), where 98\% of tasks test a
limit the prompt never states.


\vspace{-2mm}
\begin{takeaway}
\textbf{Summary:}
Results hold on held-out tasks (within 2\,pp), across temperatures,
benchmarks, model sizes, and 16 of 17 CWE types.  
\end{takeaway}

%% file: tables/t3_localization.tex
\begin{table}[t]
\centering\footnotesize\setlength{\tabcolsep}{3.5pt}
\caption{Locating the vulnerable statement: share of insecure programs whose top-ranked statement (top 3) is the vulnerable one (\%). Generated code: statements where a static finding in the task's CWE family agrees with a failed security test. Reference solutions: the lines the secure reference changes. AP (self) excludes GPT-4.1-mini, which has no open weights.}
\vspace{-4mm}
\scriptsize
\label{tab:localization}
\begin{tabular}{llrr}
\toprule
Programs (n) & Method & Top 1 & Top 3 \\
\midrule
Generated (4,091) & Random order & 9 & 26 \\
 & Uncertainty & 5 & 26 \\
 & AP (self) & 51 & 81 \\
 & AP (fixed 32B) & \textbf{58} & 85 \\
Judge sample (398) & Random order & 8 & 25 \\
 & Uncertainty & 5 & 23 \\
 & AP (self) & 57 & 84 \\
 & AP (fixed 32B) & 54 & 80 \\
 & GPT-5.6 judge & \textbf{70} & 95 \\
Reference, exact (808) & AP (fixed 32B) & \textbf{77} & 97 \\
 & Random order & 27 & 68 \\
Reference, $\pm$1 (819) & AP (fixed 32B) & \textbf{89} & 99 \\
 & Random order & 61 & 93 \\
\bottomrule
\end{tabular}
\end{table}

%% file: tables/t4_alternatives.tex
\begin{table*}[t]
\centering\footnotesize\setlength{\tabcolsep}{3.5pt}
\caption{What practitioners could use instead, on the GPT-5.6 judge's sample (148 tasks, 6 generators). Choose: secure picks among the model's own samples of a task, where they include both a secure and an insecure working program. Locate: the top statement is the vulnerable one (ties broken at random; the judge names lines without ranking). A local pass is one forward pass of the auditor over the program; AP needs one per review prompt.}
\vspace{-4mm}
\label{tab:alternatives}
\begin{tabular}{lllrrr}
\toprule
Method & API & Locates & Cost & Choose (\%) & Locate (\%) \\
\midrule
Static analyzers & no & yes & free & 55 & -- \\
Lowest generation uncertainty & no & yes & logprobs of the generator & 52 & 5 \\
Most check lines (text rule) & no & no & free & 54 & -- \\
32B yes/no & no & no & 1 local pass & 75 & -- \\
GPT-5.6 Sol judge & yes & yes & \$7.3 per 1k programs & 68 & \textbf{70} \\
AP (self) & no & yes & 2 local passes & 70 & 57 \\
AP (fixed 32B) & no & yes & 2 local passes & \textbf{77} & 54 \\
\bottomrule
\end{tabular}
\par\vspace{2pt}\footnotesize Static analyzers (Bandit, Semgrep, cppcheck, Flawfinder) choose the candidate with the fewest findings in the task's CWE family; they are not scored on locating, because the localization labels are built from their findings. At the analyzers' false-alarm rate (12\%), they flag 18\% of security hallucinations on this sample, the judge 21\%, AP (fixed 32B) 32\%, and AP and the analyzers together 37\% (Figure~\ref{fig:alternatives}b).
\end{table*}

%% file: sections/related_works.tex
\section{Related Work}
\label{sec:related}

\textbf{Hallucination detection.}
Natural-language hallucination detectors rely on the model's own uncertainty or self-consistency~\cite{farquhar2024semantic, kossen2024semantic,kadavath2022language,manakul2023selfcheckgpt,dhuliawala2023chain,mundler2024self}. Among them,
semantic entropy clusters completions by
meaning~\cite{farquhar2024semantic}, probes distill this into a linear
classifier on hidden states~\cite{kossen2024semantic}, P(True) asks the model to rate its own answer~\cite{kadavath2022language}, SelfCheckGPT flags
unreproduced statements~\cite{manakul2023selfcheckgpt},
chain-of-verification has models fact-check
themselves~\cite{dhuliawala2023chain}, and self-contradiction detectors
work within single responses~\cite{mundler2024self}. Models encode
correctness in hidden states without surfacing
it~\cite{orgad2025llms}, though such probes degrade under
perturbation~\cite{haller2025llm}.  Security hallucinations violate the
shared assumption (~\S\ref{sec:rq1}).
Code-specific work classifies
hallucination types~\cite{tian2024codehalu,liu2024exploring}, applies
SelfCheckGPT to code without addressing
security~\cite{ito2025hallucination}, and measures package hallucination
at 5--22\%~\cite{krishna2025importing,spracklen2024we}, but none
separates secure from insecure code among functionally correct programs.

\noindent\textbf{Secure code generation.}
Studies report 40-76\% of generated programs are
vulnerable~\cite{pearce2021asleep,Khoury2023HowSI,tihanyi2023formai}.
Benchmarks with executable security tests include
SecCodePLT~\cite{yang2024seccodeplt}, CWEval~\cite{zhang2025cweval},
CodeSecEval~\cite{li2024your}, and
SecureAgentBench~\cite{moreschi2025secureagentbench}. SecurityEval~\cite{siddiq2022seceval}
and LLMSecEval~\cite{tony2023llmseceval} provide security prompts without
executable tests, and HumanEval+~\cite{liu2024your} targets correctness.
Generation-time defenses steer models via prefix
vectors~\cite{he2023large},security instruction tuning~\cite{he2024instruction}, preference
learning~\cite{xu2025prosec}, or prompt
engineering~\cite{nazzal2024promsec,hajipour2024codelmsec}, but all
require training access or prompt control.  Post-generation, static
analyzers cover only 15\% of security hallucinations. 
SecCodePRM~\cite{zhang2025secodeprm} trains a process reward model on
step-level annotations; and per-vulnerability steering vectors reduce
insecure generation by up to 74\% but need white-box access to the
generator~\cite{zhou2026surgical}.  Audit pressure requires no training
or labeled data and produces both a program ranking and a per-statement
localization from two teacher-forced passes on code from any source.

%% file: sections/threats.tex
\section{Threats to Validity}
\label{sec:threats}

\noindent\textbf{Internal validity.}
Frame wording, $\mathrm{AP}_{\max}$, and comment blanking were
frozen on a 70\% pilot split selected by a deterministic hash of the task identifier. The main run was scored once, and the 30\% held-out split reproduces the secure-pick gain over random within two points (\S\ref{sec:rq5}). The fixed auditor is also one of the generators and might prefer its own code, but removing its programs leaves the result unchanged (74.5\% of 1\,218 pools, \S\ref{sec:rq2}). 

\noindent\textbf{External validity.}
SecCodePLT, a python-based dataset, contributes 91\% of programs. CWEval and authored tasks, which include C/C++, show a larger gain (\S\ref{sec:rq5}). Benchmark datasets are public, so memorization is possible. However, our authored tasks are not public. We use one fixed auditor
(Qwen2.5-Coder-32B). Self-audit across 11 generators from six
families confirms the measure is not specific to one model
(\S\ref{sec:rq2}), though other fixed auditors may behave
slightly differently. Samples for frontier models and agents are small (tested on 80 tasks), so those results are directional. 

\noindent\textbf{Construct validity.}
We equate \emph{secure} as a ground truth with passing the task's security tests. A test can miss a vulnerability, or check a requirement the prompt never states. 
Statement labels require a CWE-matched analyzer finding, so
localization is evaluated only on flaws the analyzers can place, under-representing missing-check vulnerabilities. As a second evaluation free of this bias, we use 808 benchmark reference pairs whose label is the line the secure version changes (77\% top-1, \S\ref{sec:rq3}). By design, \system{} ranks within a task.


%% file: sections/conclusion.tex
\section{Conclusion}
\label{sec:conclusion}
We presented \system{}, a post-hoc method for choosing secure code and locating security hallucinations from LLM-generated code. Its core idea is audit pressure, which measures the reluctance of a security reviewer to reproduce each token compared to a style reviewer. Two teacher-forced passes provide both a program ranking and a statement ranking, with no training, no access to the generator, and no CWE-specific rules.  On 899 curated tasks across 12 generators, \system{} picks the secure candidate in 74\% of pools where the choice matters, which is better than the generator model's own uncertainty (65\%). It further places the vulnerable statement first in 58\% of insecure programs, within the top three in 85\%, and chooses secure code more often than a frontier LLM judge (77\% vs.\ 68\%), all while running locally. Audit pressure transfers effectively to agentic code (76\% secure picks) and complements static
analyzers. Because audit pressure shows exactly where a security reviewer would change the code, it is a natural starting point for automated program repair.


%% file: sections/data_availability.tex
\section{Data Availability}
\label{sec:data_avail}
We open-source the code and artifacts of~\system{} at~\url{https://github.com/Prometheused/SPLASH}. 